\newtheorem{theorem}{Theorem}[section]
\newtheorem{lemma}[theorem]{Lemma}
\newtheorem{exam}[theorem]{Example}
\newcommand{\be}{{\mathrm{E}}}
\newcommand{\bv}{{\mathrm{Var}}}
\DeclareMathOperator*{\argmax}{\arg\max}
\newcommand{\CommentS}[1]{}
\title{Improved Efficiency Guarantees in Auctions with Budgets}
\date{}
\begin{document}
\author{
	Pinyan Lu\thanks{Microsoft Research. {\tt pinyanl@microsoft.com}}
\and
    Tao Xiao\thanks{Shanghai Jiao Tong University. {\tt taoxiao1992@gmail.com}}
}
\maketitle
\begin{abstract}
We study the efficiency guarantees in the simple auction environment where the auctioneer has one unit of divisible good to be distributed among a number of budget constrained agents. With budget constraints, the social welfare cannot be approximated by a better factor than the number of agents by any truthful mechanism. Thus, we follow a recent work by Dobzinski and Leme \cite{DBLP:conf/icalp/DobzinskiL14} to approximate the liquid welfare, which is the welfare of the agents each capped by her/his own budget. We design a new truthful auction with an approximation ratio of $\frac{\sqrt{5}+1}{2} \approx 1.618$, improving the best previous ratio of $2$ when the budgets for agents are public knowledge and their valuation is linear (additive). In private budget setting, we propose the first constant approximation auction with approximation ratio of $34$. Moreover, this auction works for any valuation function. Previously, only $O(\log n)$ approximation was known for linear and decreasing marginal (concave) valuations, and $O(\log^2 n)$ approximation was known for sub-additive valuations.
\end{abstract}


\section{Introduction}
We consider a simple auction environment: the auctioneer has certain amount of divisible good to be distributed among a number of $n$ agents. Since the good is divisible, without loss of generality, we can always assume that it is of one unit. Each agent $i\in [n]$ has a valuation function $v_i(\cdot)$ for the good (willingness-to-pay) and a budget $B_i$ to indicate the maximum amount of money she/he is able to pay to the auctioneer (ability-to-pay). We always assume that the valuation $v_i(\cdot)$ is private information for agent $i$ and we shall study both public budget model where the budgets are public knowledge to the auctioneer and private budget model where budget $B_i$ is also private information for agent $i$. Upon receiving the bids, the auction allocates $x_i\geq 0$ unit of the good to agent $i$ and charge her/him $p_i\geq 0$ amount of money. Then the utility of agent $i$ is $v_i(x_i) -p_i$ if $p_i\leq B_i$; otherwise her/his utility is $-\infty$ since she/he does not have enough money to pay. We call an auction  \emph{truthful} or \emph{incentive compatible} if it is always a dominant strategy for every bidder $i$ to submit her/his true private information.  We say a randomized auction is \emph{universally truthful} if it is a probabilistic distribution over deterministic truthful auctions. The auction is budget feasible if we always have $p_i\leq B_i$ for a truth-telling agent $i$.

If there is no budget constraint, the remarkable VCG auction~\cite{Vickrey1961,Clarke1971,Groves1973} is a truthful auction to achieve optimal social welfare. However, budget constraints for agents are very common in real life. For high value items such as spectrum, this is due to the ability-to-pay: an agent who values the item very high may not have enough money to pay it; even for relatively low value items such as key words auction for search engine, budget is also the first thing to concern for advertisers, since the volume for the auction could be very large, and a budget is used for risk control. The existence of budget brings in a huge challenge to the design of auctions and even theoretical impossibility results especially when social warfare is the objective. In particular, no truthful auction can approximate social welfare by a better factor than the number of agents even with publicly known budget constraints and linear valuation functions. The main reason is that we cannot truthfully allocate a significant amount of good to agents with very high values but small budgets.

To give a more realistic benchmark for social welfare, a new notion called \emph{liquid welfare} was proposed by Dobzinski and Leme~\cite{DBLP:conf/icalp/DobzinskiL14} as an alternative quantifiable measure for social efficiency. It is defined to be
 $$\overline{W}(\mathbf{x}) = \sum_i \min\{v_i(x_i), B_i\}.$$
 Basically, each agent's utility is capped by her/his budget. Therefore, an agent with high value but small  budget cannot contribute much to liquid welfare. This is a reasonable measure as argued in the paper~\cite{DBLP:conf/icalp/DobzinskiL14}: ``efficiency should be measured only with respect to the funds available to the bidder at the time of the auction, and not the additional liquidity he might gain after receiving the goods in the auction''. This is also the maximum amount of revenue an omniscient seller would be able to extract from a certain instance.  More justification for this measure can be found in the paper~\cite{DBLP:conf/icalp/DobzinskiL14}.

 With respect to this optimal liquid welfare objective, their paper gave two truthful auctions both with approximation ratio of $2$ in the public budget model with linear (additive) valuation functions and proved a lower bound of $\frac{4}{3}$ in this same setting. They explicitly asked whether one can have a truthful auction that provides an approximation ratio better than $2$ in this simple setting.
 For the more challenging private budget model, they provided an $O(\log n)$ approximation truthful auction for linear and decreasing marginal (concave) valuations, and an $O(\log^2 n)$ approximation auction for sub-additive valuations. The main open question is whether a constant approximation exists or not. This was not known even for simple linear valuation functions.

\subsection{Our Results and Techniques}
In this paper, we answer both of their open questions affirmatively. For the public budget setting and linear valuations, we design a new truthful auction with an approximation ratio of $\varphi=\frac{\sqrt{5}+1}{2}\approx 1.618$, where $\varphi$ is the golden ratio (i.e. the positive solution for the equation $t^2=t+1$). For the private budget setting, we design the first constant approximation auction with an approximation ratio of $34$. More importantly, our auction works for all valuation functions, not necessary linear, concave or sub-additive. This is a rather surprising result and this generality makes the auction applicable in many different scenarios.

Our design techniques are also new. For the $2$ approximation auction proposed in~\cite{DBLP:conf/icalp/DobzinskiL14}, the rough idea is to use a uniform market clearing price to sell the item to agents. Their ratio of $2$ is tight for their mechanism even for two agents. The bad case happens when one agent has very high value but limited budget while the other agent has a relatively lower value but enough budget. In the optimal allocation, the first agent gets very little share of the good, but this cannot be archived by a uniform pricing scheme. The high level idea of our mechanism is that  an agent can pay certain uniform price per unit but only use up certain fraction of her/his budget. In order to use up more of her/his budget, she/he needs to pay higher price per unit. By this mechanism, an agent with high value but limited budget will still use up all her/his budget but get less share of the good.

Our above mechanism crucially uses the fact that the auctioneer knows the budget for each agent. For the private budget setting, we go back to the uniform pricing scheme.
However, we do not know how to compute a good global  uniform price truthfully in private budget setting.
%
 To overcome this, we make use of random sampling, one of the most powerful techniques in truthful mechanism design~\cite{Goldberg2006,BeiCGL12,GravinL13}. We randomly divide the agents into two groups, compute the optimal liquid welfare for one group and use this as a guide to charge agents in the other group.

In order to make this random sampling auction work, the contribution in an optimal solution from different groups should be relatively balanced. In particular, if most of the contribution is from one single agent, random sampling does not work. Therefore, random sampling is usually combined with a Vickrey auction~\cite{BeiCGL12,ChenGL13} which works well in this unbalanced case. We also combine a Vickrey auction here for the modified valuation $\min\{v_i(1), B_i\}$. This Vickrey auction was also mentioned in~\cite{DBLP:conf/icalp/DobzinskiL14} and was claimed to be truthful there. However, we notice that there is a subtle issue due to budget constraint and tie-breaking which makes the auction not truthful. To overcome this, we modify the Vickrey auction in which the winner (with highest value) need to pay a bit higher than the second highest value. In the case that the two highest  values of the agents are very close to each other, the auction simply refuse to sell the item. We also design another version of modified Vickrey auction which works well when the two highest values are very close to each other. We think that this observation of untruthfulness and these modifications of Vickrey auction are of independent interest.

\subsection{Related Work}
Due to its practical relevance, many theoretical investigations have been devoted to analyzing auctions for budget constrained agents, especially in direction of optimal auction design which tries to maximize the revenue for the auctioneer~\cite{BorgsCIMS05,ChawlaMM11,FeldmanFLS12,DevanurHH13}.
For social efficiency, a number of previous works focus on the solution concept of Pareto Efficiency, which exist for the public budget model but not for private budget model~\cite{DobzinskiLN12,FiatLSS11}.

Similar alternative quantifiable measures for efficiency for budget constrained agents were also studied in~\cite{DevanurHH13,SyrgkanisT13} but for different solution concepts.

Another related topic is to study budget feasible mechanism design for reversal auction where the budget constrained buyer is the auctioneer rather than a bidder. This model was first proposed and studied by Singer~\cite{Singer10}. Since then, several improvements have been obtained~\cite{ChenGL11,DobzinskiPS11,BeiCGL12}.



\section{Public Budgets}
In this section, we consider the setting that agents' budgets are public information to the auctioneer, and the valuation function for each agent is linear. To simplify the notations, in this section we will use $v_i$ to denote value per unit for agent $i$ and thus $v_i(x_i) = v_i x_i$. Without loss of generality, we assume that there are $n$ agents with values $v_1 \geq \ldots \geq v_n$ and corresponding budgets $B_1, \ldots, B_n$. Let $\varphi = \frac{\sqrt{5}+1}{2}$ which is the golden ratio (i.e. the positive solution for the equation $t^2=t+1$).

For public budget and linear valuations model, it becomes a single dimensional parameter mechanism design problem with parameter $\mathbf{v} = (v_1, v_2, \ldots, v_n)$, thus an auction can be characterized by allocation rule $\mathbf{x} : \mathbb{R}^n_+ \rightarrow \mathbb{R}^n_+$ and payment rule $\mathbf{p}: \mathbb{R}^n_+ \rightarrow \mathbb{R}^n_+$ that maps $\mathbf{v}$ to a vector of allocations $\mathbf{x}(\mathbf{v})$ and a vector of payments $\mathbf{p}(\mathbf{v})$. We present the Myerson's Lemma \cite{myerson1981optimal}, which is a powerful tool in these settings.

\begin{lemma}
\label{lemmyerson}
A deterministic mechanism, with allocation and payment rule $\mathbf{x}$,$\mathbf{p}$
respectively, is truthful if and only if for each bidder $i$ and each $v_{-i}$, the following
conditions hold:
\begin{enumerate}
\item Monotone Allocation: $x_i(v_i, v_{-i})\leq x_i(v'_i, v_{-i})$ for all $v'_i \geq v_i$;
\item The payments are such that: $p_i(v_i, v_{-i})= v_i \cdot x_i(v_i,v_{-i})- \int_0^{v_i}x_i(u, v_{-i})du$.
\end{enumerate}
\end{lemma}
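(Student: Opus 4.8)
The plan is to fix an agent $i$ together with the other bids $v_{-i}$, which reduces the statement to a single real parameter: write $x(v) := x_i(v, v_{-i})$ and $p(v) := p_i(v, v_{-i})$. In this notation, truthfulness is exactly the assertion that for every true value $v$ and every possible misreport $v'$, the agent weakly prefers to bid truthfully,
$$v\,x(v) - p(v) \;\ge\; v\,x(v') - p(v').$$
Both directions of the equivalence will be extracted from this single family of inequalities, so the proof splits into a ``truthful $\Rightarrow$ (1),(2)'' part and a ``(1),(2) $\Rightarrow$ truthful'' part.

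For the forward direction I would take any two values $v < v'$ and invoke the truthfulness inequality twice: once for an agent of true value $v$ (ruling out the upward deviation to $v'$) and once for an agent of true value $v'$ (ruling out the downward deviation to $v$). Rearranging the two inequalities sandwiches the payment gap,
$$v\bigl(x(v') - x(v)\bigr) \;\le\; p(v') - p(v) \;\le\; v'\bigl(x(v') - x(v)\bigr).$$
Since $v' > v$, comparing the two ends forces $x(v') \ge x(v)$, which is precisely the monotonicity in condition (1). To recover the payment formula from the same sandwich, I would partition $[0,v]$, sum the per-step bounds, and let the mesh tend to zero so the Riemann sums converge; integration by parts then yields $p(v) - p(0) = v\,x(v) - \int_0^v x(u)\,du$, and the normalization $p(0) = 0$ (a zero-value agent is charged nothing, consistent with $p_i \ge 0$) gives the identity in condition (2). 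I expect this integration step to be the main technical obstacle: $x$ is only known to be monotone, hence merely Riemann integrable rather than differentiable, and the sandwich inequalities are exactly what allow one to bypass any differentiability assumption.

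For the converse, assume $x$ is monotone and $p$ is given by the stated formula. Substituting the formula into the utility of reporting $z$ when the true value is $v$ gives
$$v\,x(z) - p(z) = (v - z)\,x(z) + \int_0^z x(u)\,du,$$
so the gain from reporting truthfully rather than $z$ is
$$\bigl(v\,x(v) - p(v)\bigr) - \bigl(v\,x(z) - p(z)\bigr) = \int_z^v x(u)\,du - (v - z)\,x(z).$$
When $z < v$, monotonicity gives $x(u) \ge x(z)$ on $[z,v]$, so the integral dominates $(v-z)x(z)$ and the gain is nonnegative; when $z > v$, the same inequality applied on $[v,z]$ gives $x(u) \le x(z)$, and the gain is again nonnegative. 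Hence truthful reporting is optimal against every deviation $z$, which establishes truthfulness and completes the equivalence.
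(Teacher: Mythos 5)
Your proof is correct, but note that the paper itself never proves this statement: it is Myerson's Lemma, stated with a citation to Myerson (1981) and used as a black box, so there is no internal proof to compare against. What you have supplied is the standard self-contained argument — the two-sided deviation inequalities giving the sandwich $v\,(x(v')-x(v)) \le p(v')-p(v) \le v'\,(x(v')-x(v))$, from which monotonicity falls out immediately and the payment identity follows by telescoping over a partition and passing to the limit (valid because a monotone $x$ is Riemann integrable, so the upper and lower Stieltjes sums converge to the same value); the converse by substituting the payment formula and comparing $\int_z^v x(u)\,du$ with $(v-z)\,x(z)$ is also exactly right. One caveat worth flagging: the "only if" direction is false as literally stated without a normalization, since adding any constant $c$ (independent of $v_i$) to $p_i$ preserves truthfulness — e.g.\ a constant allocation with constant positive payment is truthful, has $p_i \ge 0$, but violates the formula. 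Your appeal to $p_i \ge 0$ alone does not close this; one needs either the convention $p_i(0,v_{-i})=0$ or an individual-rationality assumption (utility of a zero-value truthful bidder is nonnegative, which combined with $p_i\ge 0$ forces $p_i(0,v_{-i})=0$). This is a defect inherited from the standard compressed statement of the lemma rather than from your argument, but a careful write-up should make the normalization explicit.
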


Our new auction for public budget model is presented in Auction~\ref{GSUPA}. Here we assume that $v_{n+1} = 0$ if occurs.
\IncMargin{1em}
\begin{algorithm}[h]
\SetKwInOut{Input}{input}\SetKwInOut{Output}{output}
\Input{$n $ agents with valuations $v_1\geq \ldots \geq v_n$ and corresponding budgets $B_1, \ldots, B_n$}
\Output{An allocation $(x_1, \ldots, x_n)$ and corresponding payments $(p_1, \ldots, p_n)$}
\BlankLine
\Begin{
Let $k\in [n]$ be the maximum integer s.t. $\frac{1}{\varphi} \sum_{j=1}^k B_j \leq v_k$\;
\If{$\frac{1}{\varphi}\sum_{j=1}^k B_j\geq v_{k+1}$}{
	\For{$i = 1$  \emph{\KwTo} $k$}{
		$\hat{v}_i \leftarrow \frac{v_i}{\sum_{j=1}^k B_j}$\;
		$x_i \leftarrow \frac{B_i}{\sum_{j=1}^k B_j}\min\{\hat{v}_i,1\}$\;
	}
	\For{$i = k+1$  \emph{\KwTo} $n$}{
		$x_i \leftarrow 0$\;
	}
	}
	\Else
	{
	\For{$i = 1$  \emph{\KwTo} $k$}{
		$\hat{v}_i \leftarrow \frac{ v_i}{\varphi v_{k+1}}$\;
		$x_i \leftarrow \frac{B_i}{\varphi v_{k+1}}\min\{\hat{v}_i,1\}$\;
	}
	$x_{k+1} \leftarrow \frac{1}{\varphi} - \sum_{i = 1}^k \frac{ B_i}{\varphi^2 v_{k+1}}$\;
	\For{$i = k+2$  \emph{\KwTo} $n$}{
		$x_i \leftarrow 0$\;
	}
	}
	\For{$i=1$ \emph{\KwTo} $n$}{
	$p_i \leftarrow v_i \cdot x_i(v_i,v_{-i})- \int_0^{v_i}x_i(u, v_{-i})du$;\quad\quad\quad//   Myerson's Payment Rule\	

	}
}
\caption{Auction for Public Budgets}\label{GSUPA}
\end{algorithm}
\DecMargin{1em}

Firstly, we verify that this is indeed a well-defined auction, namely the total amount of good it allocates does not exceed one unit.

 If $\frac{1}{\varphi}\sum_{j=1}^k B_j\geq v_{k+1}$,
\[\sum_{i=1}^n x_i = \sum_{i=1}^k x_i = \sum_{i=1}^k \frac{B_i}{\sum_{j=1}^k B_j}\min\{\hat{v}_i,1\}
				\leq \sum_{i=1}^k \frac{B_i}{\sum_{j=1}^k B_j} = 1.\]

 If $\frac{1}{\varphi}\sum_{j=1}^k B_j< v_{k+1}$,
\begin{align*}
\sum_{i=1}^n x_i = \sum_{i=1}^{k} x_i + x_{k+1} &= \sum_{i=1}^k \frac{  B_i}{\varphi v_{k+1}}\min\{\hat{v}_i,1\} +  \frac{1}{\varphi} - \sum_{i = 1}^k \frac{B_i}{\varphi^2 v_{k+1}} \\& \leq \sum_{i=1}^k \frac{ B_i}{\varphi v_{k+1}} + 1 - \sum_{i = 1}^k \frac{B_i}{\varphi v_{k+1}} = 1.
\end{align*}

\begin{theorem}
\label{thmpubmain}
For public budget model and linear valuations, Auction~\ref{GSUPA} is a truthful, budget feasible mechanism
with approximation ratio of at most $\varphi$ for liquid welfare.
\end{theorem}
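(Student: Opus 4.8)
\section*{Proof Proposal}

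The plan is to establish the three claimed properties of Auction~\ref{GSUPA} separately: truthfulness, budget feasibility, and the $\varphi$-approximation guarantee. Since payments are set by Myerson's payment rule (the second condition of Lemma~\ref{lemmyerson}), truthfulness reduces entirely to verifying monotonicity of the allocation $x_i$ in the bidder's own value $v_i$ (the first condition). First I would fix $v_{-i}$ and track how $x_i(v_i, v_{-i})$ behaves as agent $i$ raises her bid. The subtlety here is that the threshold index $k$ and the branch taken (the \textbf{if} versus the \textbf{else}) both depend on $\mathbf{v}$, so as $v_i$ increases, agent $i$ may move up in the sorted order and the pivot $k$ may change. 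I would argue monotonicity piecewise: within a fixed sorted position and branch, $x_i$ is manifestly nondecreasing in $v_i$ (it is proportional to $\min\{\hat v_i,1\}$ with $\hat v_i$ increasing in $v_i$), and then I would check that $x_i$ does not jump downward at the boundaries where $k$ or the branch switches. This continuity-at-boundaries check is where I expect the real work to lie.

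Second, for budget feasibility I would use the fact that Myerson's payment satisfies $p_i = v_i x_i - \int_0^{v_i} x_i(u,v_{-i})\,du \le v_i x_i$, and then bound $v_i x_i$ against $B_i$. In the \textbf{if} branch, $x_i = \frac{B_i}{\sum_j B_j}\min\{\hat v_i,1\} \le \frac{B_i}{\sum_j B_j}\hat v_i = \frac{B_i v_i}{(\sum_j B_j)^2}$, so $v_i x_i$ compares to $B_i$ through the factor $\frac{v_i^2}{(\sum_j B_j)^2}$; here I would invoke the defining inequality $\frac{1}{\varphi}\sum_{j=1}^k B_j \le v_k \le v_i$ for $i\le k$. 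A parallel computation handles the \textbf{else} branch using $v_i x_i \le \frac{B_i v_i^2}{\varphi^2 v_{k+1}^2}$ together with $\frac{1}{\varphi}\sum_{j\le k}B_j < v_{k+1}$. The golden-ratio identity $\varphi^2 = \varphi + 1$ should be exactly what makes these bounds close.

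Third, and most substantively, I would prove the approximation ratio. The strategy is to lower-bound the liquid welfare $\overline W(\mathbf{x})$ achieved by the auction and compare it to an upper bound on the optimal liquid welfare $\mathrm{OPT}$. For the optimum I would use that no feasible allocation can give agent $i$ more than $\min\{v_i, B_i\}$ worth of capped value, and that the total good is one unit, so $\mathrm{OPT} \le \max\{v_1, \sum_i B_i \cdot(\text{appropriate share})\}$ type bounds; more precisely I expect $\mathrm{OPT}$ is controlled by the largest value $v_1$ together with the aggregate budget $\sum_{j\le k} B_j$ of the selected agents. The choice of the pivot $k$ as the maximal index with $\frac{1}{\varphi}\sum_{j\le k}B_j \le v_k$ is engineered so that in each branch the auction extracts a $1/\varphi$ fraction of the relevant bound. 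I would compute $\overline W(\mathbf{x})$ explicitly in each branch and show $\overline W(\mathbf{x}) \ge \frac{1}{\varphi}\,\mathrm{OPT}$, with the worst case pinning down the constant; the equation $t^2 = t+1$ is presumably what certifies that $\varphi$ (and not a smaller constant) is the right bound.

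The main obstacle I anticipate is the approximation analysis rather than truthfulness: specifically, identifying the correct tight upper bound on $\mathrm{OPT}$ and showing that the delicate definition of $k$ forces the ratio down to exactly $\varphi$ in both branches simultaneously. The truthfulness argument, while requiring care at the branch and pivot boundaries, should be routine given Myerson's Lemma, and budget feasibility should fall out of the same algebraic inequalities used to define $k$.
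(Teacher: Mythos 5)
Your overall architecture (Myerson monotonicity for truthfulness, then budget feasibility, then the ratio) matches the paper, and your truthfulness plan --- piecewise monotonicity plus a check that the allocation does not jump downward when the pivot $k$ or the branch changes --- is exactly what the paper does. However, your budget feasibility argument has a genuine gap that cannot be patched within the strategy you describe. You propose to use $p_i \le v_i x_i$ and then show $v_i x_i \le B_i$. This is impossible for the high-value winners: in case I, any agent with $v_i \ge \sum_{j\le k} B_j$ has $\min\{\hat v_i,1\}=1$, hence $v_i x_i = \frac{v_i B_i}{\sum_{j\le k} B_j} \ge B_i$, with strict and unbounded excess as $v_i$ grows, so no algebra can recover $v_i x_i \le B_i$. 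Moreover, the inequality you plan to invoke, $\frac{1}{\varphi}\sum_{j\le k} B_j \le v_k \le v_i$, bounds $v_i$ from \emph{below}, whereas your computation needs an upper bound $v_i \le \sum_{j\le k} B_j$ --- it points the wrong way. The paper's actual argument exploits the structure of Myerson's payment more finely: for $i\le k$ the allocation curve $u \mapsto x_i(u,v_{-i})$ becomes constant once $u \ge p_0$, where $p_0 = \max\{\sum_{j\le k} B_j,\ \varphi v_{k+1}\}$ is the price at which $\hat v_i$ reaches $1$; hence $\int_0^{v_i} x_i(u)\,du \ge (v_i-p_0)x_i$ and $p_i \le p_0 x_i = p_0\cdot\frac{B_i}{p_0}\min\{\hat v_i,1\} \le B_i$. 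The crude bound $p_i \le v_i x_i$ is used only for agent $k+1$ in case II, where it does suffice via the maximality of $k$.

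On the approximation ratio, your sketch is too coarse to certify the approach. A bound of the form $OPT \le \max\{v_1,\dots\}$ is not the right tool; the paper instead uses the greedy characterization of the optimum to derive, for every $j$, the bound $OPT \le \sum_{i\le j} B_i + v_{j+1}(1-\sum_{i\le j} B_i/v_i)$, applies it with carefully chosen $j$ (either $k$ or $k-1$) in a three-way case analysis on whether $\sum B_i/v_i$ exceeds $1$ over various index ranges, and then reduces everything to the per-agent inequalities $\hat v_i^2 \ge \frac{1}{\varphi}\bigl(\varphi - \frac{1}{\varphi \hat v_i}\bigr)$ (case I) and $\hat v_i^2 - \frac{1}{\varphi^2} \ge \frac{1}{\varphi}\bigl(1 - \frac{1}{\varphi \hat v_i}\bigr)$ (case II), valid for $\hat v_i \in [\frac{1}{\varphi},1]$; this is where the identity $\varphi^2 = \varphi+1$ actually enters. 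Your instinct that the hard part is the ratio analysis is correct, but the missing content is precisely this pair of upper bounds on $OPT$ and the pointwise inequalities matching them against the welfare $\sum_i B_i \hat v_i^2$ collected by the auction.
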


These properties shall be proved in the following two subsections. The following notations are used in the whole section. Let $k$ be as defined in Auction \ref{GSUPA},  $p_0 = \max\{\sum_{i=1}^kB_i, \varphi v_{k+1}\}$, and  $k_1$ be the maximum integer s.t. $v_{k_1} \geq p_0$. For agent $i \in [n]$ let $\hat{v}_i = \frac{v_i}{p_0}$, which is as defined in Auction \ref{GSUPA}.
We call instances with $\frac{1}{\varphi}\sum_{j=1}^k B_j\geq v_{k+1}$ of case I and instances with $\frac{1}{\varphi}\sum_{j=1}^k B_j< v_{k+1}$ of case II. In most of our analysis, we distinguish these two cases and prove them separately. We have the following facts by the rule of our auction:
\begin{itemize}
  \item In case I, $v_1\geq \cdots \geq v_{k_1} \geq p_0 \geq v_{k_1+1} \geq  \cdots \geq v_{k} \geq \frac{p_0}{\varphi}  \geq v_{k+1} \geq \cdots \geq v_{n}$;
  \item In case II, $v_1\geq \cdots \geq v_{k_1} \geq p_0 \geq v_{k_1+1} \geq  \cdots \geq v_{k+1} = \frac{p_0}{\varphi} \geq v_{k+2} \geq \cdots \geq v_{n}$.
\end{itemize}
Thus $\hat{v}_i\geq 1$ for $i=1,2,\ldots, k_1$ and $\hat{v}_i\in [\frac{1}{\varphi}, 1)$ for $i=k_1+1, k_1+2,\ldots, k$.

\subsection{Truthfulness and Budget Feasibility}
By Myerson's Lemma, we only need to verify that the allocation function in our auction is monotone as our payment is already determined by Myerson's integration.
\begin{lemma}(Monotonicity)
\label{lemmono}
The allocation function in Auction~\ref{GSUPA} is monotone, i.e., $v_i \rightarrow x_i(v_i, v_{-i})$ is non-decreasing.
\end{lemma}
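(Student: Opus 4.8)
The plan is to fix the values $v_{-i}$ of all other agents and regard $x_i$ as a function of the single variable $t=v_i\in[0,\infty)$, showing that this function is non-decreasing; by Myerson's Lemma (Lemma~\ref{lemmyerson}) this is all that remains for truthfulness. The first step is to unify the two cases through the quantity $p_0=\max\{\sum_{j=1}^k B_j,\ \varphi v_{k+1}\}$: in both case I and case II one has $\hat v_i=v_i/p_0$, and every served agent $i\le k$ receives $x_i=\frac{B_i}{p_0}\min\{v_i/p_0,1\}$, while in case II the extra agent $k+1$ receives $x_{k+1}=\frac1\varphi\bigl(1-\tfrac{1}{p_0}\sum_{j=1}^k B_j\bigr)$ after substituting $v_{k+1}=p_0/\varphi$. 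The point of this rewriting is that $p_0$ acts as a per-unit price, and on any region where $p_0$ does not depend on $t$ monotonicity is immediate: $B_i/p_0$ is constant, $B_i v_i/p_0^2$ is increasing, and the case-II formula for agent $k+1$ (where $v_{k+1}=t$ forces $p_0=\varphi t$) increases with $t$ because $\frac{1}{p_0}\sum_{j\le k}B_j$ shrinks.

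The second step is to partition $[0,\infty)$ into finitely many intervals on which the combinatorial data are constant: the rank $r$ of agent $i$ among all agents, the cutoff index $k$, and whether we are in case I or case II. As $t$ increases agent $i$ only moves up in the sorted order, so there are finitely many breakpoints; on the interior of each interval I would write $x_i(t)$ in one of the closed forms above and check that it is non-decreasing directly. A useful observation is that when agent $i$ lies strictly inside the served block (so $k\ne r$), neither $\sum_{j\le k}B_j$ nor $v_{k+1}$ involves the number $t$, hence $p_0$ is locally constant and $x_i$ is manifestly monotone.

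The third step is to verify that $x_i$ never jumps downward at a breakpoint, that is, that the left limit is at most the right value across each structural transition. Four kinds of transition occur: (a) agent $i$ entering the served set, (b) the boundary between case I and case II, (c) a change of the cutoff $k$, and (d) agent $i$ crossing the threshold $v_i=p_0$. Two of these are clean and I would dispatch them first: at $v_i=p_0$ the two formulas $B_i v_i/p_0^2$ and $B_i/p_0$ agree, and at the case~I/case~II boundary the defining condition reads $\sum_{j\le k}B_j=\varphi v_{k+1}=p_0$, which makes the case-II formula for agent $k+1$ vanish, matching the zero allocation it receives in case~I.

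The main obstacle is the interaction at the remaining transitions, where $t$ grows enough to pull agent $i$ (or a lower agent) into the served block or to change $k$, and $p_0$ itself moves at the same instant. Here a naive estimate fails, since a larger $p_0$ shrinks the factor $B_i/p_0$, so the increase in $t$ must be shown to compensate exactly. I expect to control this by comparing the two sides of the defining inequality for $k$ across the breakpoint, namely $\frac1\varphi\sum_{j=1}^{k}B_j\le v_k$ and $\frac1\varphi\sum_{j=1}^{k+1}B_j> v_{k+1}$, together with the identity $\varphi^2=\varphi+1$, to show that the competing expressions for $p_0$ and for $x_i$ meet continuously (or with an upward jump) precisely at the breakpoint. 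Establishing this at the $k$-change and set-entry breakpoints is the crux of the argument; the remaining work is bookkeeping over the finitely many regimes.
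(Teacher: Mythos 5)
Your overall strategy --- fix $v_{-i}$, write $x_i$ as a piecewise closed form in $t=v_i$, check monotonicity inside each combinatorial regime, and check that no breakpoint produces a downward jump --- is exactly the strategy of the paper's proof, and the pieces you actually verify are correct: the unification via $p_0$, local monotonicity when $p_0$ is locally constant, the kink at $v_i=p_0$ where $B_iv_i/p_0^2$ and $B_i/p_0$ agree, and the entry of agent $k+1$ into the winner set, where the case-II formula $\frac{1}{\varphi}\bigl(1-\frac{1}{\varphi t}\sum_{j\le k}B_j\bigr)$ vanishes at $t=\frac{1}{\varphi}\sum_{j\le k}B_j$ and so matches the zero allocation on the case-I side.

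However, the proof is not complete, and the gap is not ``bookkeeping'': the breakpoints you defer with ``I expect to control this by comparing the two sides of the defining inequality for $k$'' are precisely where all of the substance of the paper's argument lies. Concretely, when agent $k+1$'s value climbs, two hard transitions occur and each needs an explicit verification. First, if $\frac{1}{\varphi}\sum_{j=1}^{k+1}B_j\le v_k$, the cutoff grows from $k$ to $k+1$ at $t=\frac{1}{\varphi}\sum_{j=1}^{k+1}B_j$ (case II becomes case I), and one must compute that $\frac{B_{k+1}}{\varphi\sum_{j=1}^{k+1}B_j}=\frac{1}{\varphi}-\frac{1}{\varphi^2}\sum_{j=1}^{k}\frac{B_j}{t}$ at that point, so the allocation is continuous there. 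Second, if $\frac{1}{\varphi}\sum_{j=1}^{k+1}B_j> v_k$, agent $k+1$ reaches $v_k$ first and displaces agent $k$; you must then determine the new regime, which splits into two sub-cases. If $\frac{1}{\varphi}\bigl(\sum_{j=1}^{k-1}B_j+B_{k+1}\bigr)>v_k$, the new cutoff is $k-1$, the instance is still case II with agent $k+1$ as the last winner, and the allocation jumps up to $\frac{1}{\varphi}-\frac{1}{\varphi^2}\sum_{j=1}^{k-1}\frac{B_j}{v_k}$, which dominates the left limit simply because $B_k$ is dropped from the sum. If $\frac{1}{\varphi}\bigl(\sum_{j=1}^{k-1}B_j+B_{k+1}\bigr)\le v_k$, the new cutoff is $k$, agent $k+1$ becomes a regular winner with allocation at least $\frac{B_{k+1}}{\varphi^2 v_k}$, and this dominates the left limit $\frac{1}{\varphi}-\frac{1}{\varphi^2}\sum_{j=1}^{k}\frac{B_j}{v_k}$ if and only if $\frac{1}{\varphi}\sum_{j=1}^{k+1}B_j\ge v_k$ --- which is exactly the hypothesis that put you in this branch. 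None of these determinations (what the post-breakpoint cutoff and case are) nor these inequalities appear in your write-up; moreover, the tool you guess you will need, the identity $\varphi^2=\varphi+1$, is not what closes the argument --- the branch-defining inequality itself is. Until the displacement analysis above is carried out, the lemma is not proved.
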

\begin{proof}
For case I, only the first $k$ agents get non-zero unit of the item, thus we only need to prove that for these agents, one's share is non-decreasing if one increases her/his bid. This is obvious since allocation $x_i = \frac{B_i}{\sum_{j=1}^k B_j}\min\{\hat{v}_i,1\} = \frac{B_i}{\sum_{j=1}^k B_j}\min\{\frac{v_i}{\sum_{j=1}^k B_j},1\} $ of agent $i$  is a monotone non-decreasing function in $v_i$.

Now we assume that we are in case II where only the first $k+1$ agents get non-zero unit of the item.
 By the same argument as above, the first $k$ agents get no less unit of the item if she/he increases her/his bid. We prove that this also holds for the $(k+1)$-th agent.

 For agent $k+1$, as she/he continues to increase her/his value and keep the $(k+1)$-th place, her/his allocation will increase, since it is $x_{k+1} = \frac{1}{\varphi} - \sum_{j=1}^k \frac{B_j}{\varphi^2 v_{k+1}}$. We consider the following two cases when $v_{k+1}$ increases further:
\begin{itemize}
\item $\frac{1}{\varphi}\sum_{j=1}^{k+1}B_j \leq v_k$. In this case, the value of $v_{k+1}$ first reaches $\frac{1}{\varphi}\sum_{j=1}^{k+1}B_j$ when increasing and her/his allocation is updated to
    $\frac{B_{k+1}}{\sum_{j=1}^{k+1}B_j}\min\{\hat{v}_{k+1}, 1\} = \frac{1}{\varphi} \cdot \frac{B_{k+1}}{\sum_{j=1}^{k+1}B_j} = \frac{1}{\varphi} - \frac{1}{\varphi^2} \sum_{j=1}^k \frac{B_j}{v_{k+1}}$.
   After that, this becomes an instance of case I and the allocation continues to increase as $v_{k+1}$ increases.
\item $\frac{1}{\varphi}\sum_{j=1}^{k+1}B_j > v_k$. In this case, the value of $v_{k+1}$ first reaches $v_k$ when increasing, and displace player $k$ to be the $k$th highest value. Then one of the following things will happen:
\begin{itemize}

\item If $\frac{1}{\varphi}(\sum_{j=1}^{k-1}B_j + B_{k+1}) > v_k$, since $\frac{1}{\varphi}\sum_{j=1}^{k-1}B_j < \frac{1}{\varphi}\sum_{j=1}^k B_j < v_k$, it is still an instance of case II and this agent $k+1$ is still the last agent in the winner set. The only difference is that the agent $k$ is not longer in the winner set and therefore the allocation $x_{k+1}$ gets updated to $\frac{1}{\varphi}-\frac{1}{\varphi^2}\sum_{j=1}^{k-1}\frac{B_j}{v_k} \geq \frac{1}{\varphi} - \frac{1}{\varphi^2}\sum_{j=1}^k\frac{B_j}{v_k}$.

\item If $\frac{1}{\varphi} (\sum_{j=1}^{k-1}B_j + B_{k+1}) \leq v_k$,  then it is still an instance of case II but with agent $k$ as the last agent in the winner set.  Agent $k+1$ become the second-to-last agent in the winner set and the allocation  $x_{k+1}$ gets updated to $\frac{ B_{k+1}}{\varphi v_k}\min\{\hat{v}_{k+1},1\} \geq \frac{ B_{k+1}}{\varphi^2v_k}\geq \frac{1}{\varphi} - \frac{1}{\varphi^2} \sum_{j=1}^k \frac{B_j}{v_k}$.
\end{itemize}
In both cases, the allocation is non-decreasing.
\end{itemize}
This concludes the proof of truthfulness of our auction.
\end{proof}


\begin{lemma}(Budget feasibility)
The payments defined in Auction~\ref{GSUPA} do not exceed the budgets.
\end{lemma}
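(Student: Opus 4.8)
The plan is to invoke Myerson's Lemma one more time: since the payments are exactly the Myerson integral, I only need to bound $p_i = v_i\, x_i(v_i,v_{-i}) - \int_0^{v_i} x_i(u,v_{-i})\,du$ from above by $B_i$ for each agent $i$, with $v_{-i}$ held fixed. The free inequality $p_i \le v_i\, x_i(v_i)$ (the integrand is nonnegative) already settles most agents, so I would first dispose of everyone it handles and isolate the single hard regime. If $x_i=0$ the payment is zero. If agent $i$ is a winner with $\hat v_i<1$ (that is, $k_1<i\le k$ in either case), then $x_i=\frac{B_i\hat v_i}{p_0}=\frac{B_i v_i}{p_0^2}$, so $p_i\le v_i x_i = B_i\hat v_i^2 < B_i$. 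The leftover agent $k+1$ in Case II is also unsaturated, with $x_{k+1}=\frac1\varphi-\sum_{j=1}^k\frac{B_j}{\varphi^2 v_{k+1}}$; here a short computation using the Case II inequality $\frac1\varphi\sum_{j=1}^k B_j<v_{k+1}$ together with the identities $\varphi-\frac1\varphi=1$ and $1-\frac1{\varphi^2}=\frac1\varphi$ (both immediate from $\varphi^2=\varphi+1$) gives $v_{k+1}x_{k+1}\le B_{k+1}$, hence $p_{k+1}\le v_{k+1}x_{k+1}\le B_{k+1}$.

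The remaining, genuinely difficult, case is a saturated top agent $i\le k_1$, where $\hat v_i\ge1$, $v_i\ge p_0$, and $x_i(v_i)=\frac{B_i}{p_0}$; now $v_i x_i=B_i\hat v_i\ge B_i$ and the free bound is useless. Here I would exploit the Myerson integral directly, through a \emph{flat-top property}: as agent $i$ lowers her reported value from $v_i$ down to the clearing price $p_0$ (with $v_{-i}$ fixed), neither $p_0$ nor her allocation changes, so $x_i(u,v_{-i})=\frac{B_i}{p_0}$ for every $u\in[p_0,v_i]$. Granting this, $\int_0^{v_i} x_i(u)\,du \ge \int_{p_0}^{v_i}\frac{B_i}{p_0}\,du = B_i(\hat v_i-1)$, and therefore $p_i = v_i x_i(v_i)-\int_0^{v_i} x_i(u)\,du \le B_i\hat v_i - B_i(\hat v_i-1)=B_i$, as desired.

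The main obstacle is establishing this flat-top property, and it is also where budget feasibility could really break: if the allocation dipped below $\frac{B_i}{p_0}$ somewhere on $[p_0,v_i]$, the integral would shrink and the payment could overshoot $B_i$. To prove it I would show that the quantities defining $p_0=\max\{\sum_{j=1}^k B_j,\ \varphi v_{k+1}\}$ are insensitive to agent $i$'s exact value so long as it stays at least $p_0$. Since every value at least $p_0$ lies strictly above $v_{k+1}$, lowering $v_i$ to some $u\in[p_0,v_i]$ keeps agent $i$ in the top block, at worst permuting her rank among equally-or-higher agents while leaving the winner set $\{1,\dots,k\}$ (hence $\sum_{j=1}^k B_j$) and the identity of the $(k+1)$-st agent intact, and without flipping the defining inequality of the active case. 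I would verify this separately for Case I and Case II, checking in each that $k$, $\sum_{j=1}^k B_j$, and $v_{k+1}$ are unchanged and that we remain in the same case throughout $[p_0,v_i]$, so the winner formula keeps reading $\frac{B_i}{p_0(u)}\min\{u/p_0(u),1\}=\frac{B_i}{p_0}$. Once the flat-top is secured, the displayed bound closes the argument for every agent and yields $p_i\le B_i$.
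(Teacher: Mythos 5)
Your overall strategy coincides with the paper's: dispose of the zero-allocation agents and the unsaturated winners via the free bound $p_i \le v_i x_i$, and handle the saturated agents $i\le k_1$ by showing the allocation is flat (equal to $B_i/p_0$) on the whole interval $[p_0,v_i]$, so that the Myerson integral absorbs the excess and $p_i\le p_0 x_i(v_i) = B_i$. In fact you are more careful than the paper on the crucial point: the paper simply asserts that ``their allocation do not change when they increase their valuations beyond $v_i \geq p_0$,'' whereas you correctly identify that one must check that $k$, the winner set, $\sum_{j=1}^k B_j$, $v_{k+1}$, and the active case (I vs.\ II) are all invariant as agent $i$'s report ranges over $[p_0,v_i]$, and your sketch of that check is sound (any such report keeps agent $i$ inside the top-$k_1$ block, so only the ordering within that block can change, and none of the defining quantities depends on that ordering).

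There is, however, one step that fails as written: agent $k+1$ in Case II. You claim $v_{k+1}x_{k+1}\le B_{k+1}$ follows from the Case II inequality $\frac{1}{\varphi}\sum_{j=1}^k B_j < v_{k+1}$ plus golden-ratio identities. That inequality points the wrong way: it only yields the lower bound $v_{k+1}x_{k+1} = \frac{v_{k+1}}{\varphi}-\frac{1}{\varphi^2}\sum_{j=1}^k B_j > 0$, and by itself it is consistent with $v_{k+1}x_{k+1}$ being arbitrarily large relative to $B_{k+1}$ (take $\sum_{j=1}^k B_j$ fixed, $v_{k+1}$ huge, and $B_{k+1}$ tiny; the Case II inequality still holds). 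The ingredient you actually need is the \emph{maximality} of $k$ in its definition, which forces $\frac{1}{\varphi}\sum_{j=1}^{k+1}B_j > v_{k+1}$; plugging this in gives $v_{k+1}x_{k+1} = \frac{v_{k+1}}{\varphi}-\frac{1}{\varphi^2}\sum_{j=1}^{k}B_j < \frac{1}{\varphi^2}\sum_{j=1}^{k+1}B_j-\frac{1}{\varphi^2}\sum_{j=1}^{k}B_j = \frac{B_{k+1}}{\varphi^2}\le B_{k+1}$, which is exactly how the paper closes this case (``the inequality derives from the definition of $k$''). With that one substitution your argument is complete.
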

\begin{proof}
For agents $i > k$ in case I and agents $i > k+1$ in case II,   this is trivial since they do not get any good and pay nothing.  For first $k$ agents in both case I and case II, their allocation do not change when they increase their valuations beyond $v_i \geq p_0$. In other words, their allocation is a constant when $v_i \geq p_0$. Thus, their payments are bounded by $p_0 x_i = p_0 \frac{B_i}{p_0}\min\{\hat{v}_i,1\} \leq B_i$.
The only remaining case is the $(k+1)$-th agent in case II. The payment is bounded by $v_{k+1}x_{k+1} = v_{k+1}( \frac{1}{\varphi} - \sum_{i = 1}^k \frac{ B_i}{\varphi^2 v_{k+1}}) \leq \frac{1}{\varphi} B_{k+1}$, where the inequality derives from the definition of $k$. This completes the proof.
\end{proof}

\subsection{Approximation Ratio Analysis}
Before we prove the approximation ratio, we obtain some bounds for the optimal liquid welfare. We refer to the optimal liquid welfare as $OPT = \max_{\mathbf{x}}\overline{W}(\mathbf{x})$. If we know all the information, the optimal can be computed by a simple greedy.

\begin{lemma}(\cite{DBLP:conf/icalp/DobzinskiL14})
\label{lemeasy}
The optimal liquid welfare $OPT$ occurs at $\bar{x}^\ast_i = \min(\frac{B_i}{v_i}, [1- \sum_{j<i} \bar{x}^\ast_j]^+)$, where $[x]^+=\max(0,x)$.
\end{lemma}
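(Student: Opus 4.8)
The plan is to reinterpret $\overline{W}$ through marginal values and then run a greedy exchange argument. Write $f_i(x_i) := \min\{v_i x_i, B_i\}$, so that $\overline{W}(\mathbf{x}) = \sum_i f_i(x_i)$ is a sum of concave, non-decreasing, piecewise-linear functions; each $f_i$ has right-derivative $v_i$ on $[0, B_i/v_i)$ and $0$ on $(B_i/v_i, \infty)$. I maximize over the compact polytope $P = \{\mathbf{x} : \sum_i x_i \le 1,\ x_i \ge 0\}$, so a maximizer $\mathbf{x}^\ast$ exists. The recursion for $\bar{x}^\ast$ is exactly the greedy water-filling that processes the agents in the given order of decreasing value and assigns to agent $i$ either the amount $B_i/v_i$ that saturates its contribution, or all the remaining good, whichever is smaller, stopping once the unit of good is exhausted. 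The goal is to show $\overline{W}(\bar{x}^\ast) \ge \overline{W}(\mathbf{x}^\ast)$; since $\bar{x}^\ast \in P$ the reverse inequality is automatic, so this gives optimality.

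First I would take an optimal $\mathbf{x}^\ast$, let $i$ be the smallest index with $x^\ast_i \neq \bar{x}^\ast_i$ (so $x^\ast_j = \bar{x}^\ast_j$ for all $j < i$), and modify $\mathbf{x}^\ast$ to agree with $\bar{x}^\ast$ in coordinate $i$ without decreasing $\overline{W}$ and without touching the coordinates $j < i$, then induct on the number of leading coordinates that match. If $x^\ast_i < \bar{x}^\ast_i$, then since $\bar{x}^\ast_i \le B_i/v_i$ agent $i$ lies strictly below saturation, where its marginal value is $v_i$; moreover $\sum_{j \le i} x^\ast_j < \sum_{j \le i} \bar{x}^\ast_j \le 1$, so there is slack, realized either as unallocated good or as good held by some agent $j > i$ with $v_j \le v_i$. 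Shifting an infinitesimal amount $\delta$ of good to agent $i$ gains $v_i\,\delta$ and loses at most $v_j\,\delta \le v_i\,\delta$, hence does not decrease $\overline{W}$; repeating raises $x^\ast_i$ to $\bar{x}^\ast_i$. If instead $x^\ast_i > \bar{x}^\ast_i$, then feasibility gives $x^\ast_i \le 1 - \sum_{j<i} x^\ast_j = 1 - \sum_{j<i}\bar{x}^\ast_j$, so the minimum defining $\bar{x}^\ast_i$ cannot be the second term and must equal $B_i/v_i$; thus agent $i$ is over-saturated and the excess $x^\ast_i - B_i/v_i$ is worthless, so reducing $x^\ast_i$ down to $\bar{x}^\ast_i$ leaves $\overline{W}$ unchanged while freeing good for later agents.

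The step I expect to require the most care is the bookkeeping in the exchange: in the first case I must verify that the slack can always be drawn either from unused good or from a lower-value agent $j > i$ (never from the already-fixed coordinates $j < i$), and in the second case that the cap genuinely renders the surplus valueless, so that feasibility is preserved and $\overline{W}$ never drops at any step. Equal values force me to use $v_j \le v_i$ rather than a strict inequality, but this causes no difficulty, and the case $v_i = 0$ is vacuous since then agent $i$'s contribution is identically $0$. Since each modification keeps the solution feasible, does not decrease $\overline{W}$, and strictly increases the number of leading coordinates agreeing with $\bar{x}^\ast$, after finitely many steps the allocation equals $\bar{x}^\ast$; therefore $\overline{W}(\bar{x}^\ast) \ge \overline{W}(\mathbf{x}^\ast) = OPT$, which proves that the optimum is attained at $\bar{x}^\ast$.
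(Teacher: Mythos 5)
Your proof is correct, but there is nothing in the paper to compare it against: the lemma is stated with a citation to Dobzinski and Leme and the paper supplies no proof of its own, so your argument is a self-contained reconstruction rather than a variant of an in-paper proof. Your route, viewing $\overline{W}$ as a sum of concave, non-decreasing, piecewise-linear capped valuations $f_i(x)=\min\{v_ix,B_i\}$ and proving optimality of the greedy water-filling by a leading-coordinate exchange, is sound at every step: when $x^\ast_i<\bar{x}^\ast_i$, the target stays within the linear region (since $\bar{x}^\ast_i\le B_i/v_i$), so each transferred $\delta$ gains exactly $v_i\delta$ while costing at most $v_j\delta\le v_i\delta$ when drawn from an agent $j>i$ (or nothing when drawn from unallocated good), and the slack accounting $\sum_{j\le i}x^\ast_j<\sum_{j\le i}\bar{x}^\ast_j\le 1$ guarantees such a source exists; when $x^\ast_i>\bar{x}^\ast_i$, your observation that feasibility forces the minimum defining $\bar{x}^\ast_i$ to be attained at $B_i/v_i$ makes the truncation welfare-neutral. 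Two purely presentational remarks: the ``infinitesimal'' shifting is cleaner done in one shot (move $\min$ of the deficit and the available slack, source by source, over the finitely many sources), and when $v_i=0$ one should adopt the convention $B_i/v_i=+\infty$ so that the recursion itself is well defined; with that convention your case analysis goes through verbatim, since your second case cannot arise for such $i$. Neither point is a gap.
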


From this, it is easy to verify that the following expression for any $j\in [n-1]$ gives upper bounds for $OPT$, which holds even if $1-\sum_{i=1}^j \frac{B_i}{v_i}<0$:
\begin{equation}\label{bounds:OPT}
 OPT\leq \sum_{i=1}^j B_i + v_{j+1}(1-\sum_{i=1}^j \frac{B_i}{v_{i}}).
 \end{equation}
We propose our analysis of approximation ratio by the following lemma:
\begin{lemma}
\label{thmratio}
The liquid welfare achieved by Auction~\ref{GSUPA} is at least  $\frac{1}{\varphi }\cdot OPT$.
\end{lemma}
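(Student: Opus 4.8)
The plan is to prove the equivalent inequality $OPT\le \varphi\cdot\overline{W}(\mathbf{x})$, where $\mathbf{x}$ is the allocation of Auction~\ref{GSUPA}. The first step is to rewrite the welfare of the auction in a form that handles Case~I and Case~II uniformly through $p_0$. Using $\hat v_i=v_i/p_0$ and the allocation rule, an agent $i\le k_1$ (for which $\hat v_i\ge1$) gets $x_i=B_i/p_0$ and contributes exactly $B_i$, while a ``middle'' agent $k_1<i\le k$ (for which $\hat v_i\in[\tfrac1\varphi,1)$) gets $x_i=B_i\hat v_i/p_0$ and contributes $v_ix_i=B_i\hat v_i^2<B_i$. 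In Case~II there is one extra winner $k+1$; I would first verify $v_{k+1}x_{k+1}\le B_{k+1}$ (so it is not capped), which after substituting $v_{k+1}=p_0/\varphi$ and using $\varphi^2=\varphi+1$ reduces to the Case~II hypothesis $v_{k+1}>\tfrac1\varphi\sum_{i\le k}B_i$, and gives the contribution $v_{k+1}x_{k+1}=\frac{p_0-S}{\varphi^2}$ with $S=\sum_{i\le k}B_i$. Thus $\overline{W}(\mathbf{x})=\sum_{i\le k_1}B_i+\sum_{k_1<i\le k}B_i\hat v_i^2$ in Case~I, and the same plus $\frac{p_0-S}{\varphi^2}$ in Case~II.

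For the upper bound on $OPT$ I would \emph{not} use \eqref{bounds:OPT} at a single index $j$, because the correct index depends on where the greedy of Lemma~\ref{lemeasy} exhausts the supply, which is awkward to control. Instead I would use the linear program behind \eqref{bounds:OPT}: since the optimal liquid welfare equals $\max\{\sum_i y_i:0\le y_i\le B_i,\ \sum_i y_i/v_i\le1\}$, weak duality with a single multiplier $\lambda\ge0$ for the supply constraint gives, for every $\lambda\ge0$,
\begin{equation*}
OPT\ \le\ \lambda+\sum_i B_i\Big[1-\frac{\lambda}{v_i}\Big]^+,
\end{equation*}
which is exactly \eqref{bounds:OPT} for $\lambda=v_{j+1}$ but is now available for an arbitrary multiplier and is valid whether or not the optimum saturates the supply. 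The key choice is $\lambda=p_0/\varphi$. Then every agent $i$ with $\hat v_i\le\tfrac1\varphi$ — which includes all $i>k$ in Case~I and all $i\ge k+1$ in Case~II — has $[1-\lambda/v_i]^+=0$ and drops out, leaving $OPT\le \frac{p_0}{\varphi}+\sum_{i\le k}B_i\big(1-\frac{1}{\varphi\hat v_i}\big)=\frac{p_0}{\varphi}+S-\frac1\varphi\sum_{i\le k}\frac{B_i}{\hat v_i}$.

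The third step combines the two estimates and checks that both cases collapse to one inequality. Using $1+\tfrac1\varphi=\varphi$, and in Case~II absorbing the welfare term $\varphi\cdot\frac{p_0-S}{\varphi^2}=\frac{p_0-S}{\varphi}$ into the $\frac{p_0}{\varphi}$ of the $OPT$ bound, the target $OPT\le\varphi\,\overline{W}(\mathbf{x})$ reduces in \emph{both} cases to $\varphi S-\frac1\varphi\sum_{i\le k}\frac{B_i}{\hat v_i}\le\varphi\sum_{i\le k_1}B_i+\varphi\sum_{k_1<i\le k}B_i\hat v_i^2$, and since $\varphi S=\varphi\sum_{i\le k_1}B_i+\varphi\sum_{k_1<i\le k}B_i$ this is exactly
\begin{equation*}
\varphi\sum_{k_1<i\le k}B_i\big(1-\hat v_i^2\big)\ \le\ \frac1\varphi\sum_{i\le k}\frac{B_i}{\hat v_i}.
\end{equation*}
I would prove this by discarding the nonnegative terms $i\le k_1$ on the right and matching the remaining sums termwise: it suffices that $\varphi(1-\hat v^2)\le\frac{1}{\varphi\hat v}$ for every middle $\hat v\in[\tfrac1\varphi,1)$, i.e. $\hat v^2+\frac{1}{\varphi^2\hat v}\ge1$. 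This is where the golden ratio enters: $\varphi^2\hat v^3-\varphi^2\hat v+1$ factors as $\varphi(\hat v-\tfrac1\varphi)(\varphi\hat v^2+\hat v-1)$, both factors nonnegative on $[\tfrac1\varphi,1)$, with equality precisely at $\hat v=\tfrac1\varphi$ (using $\varphi-1=\tfrac1\varphi$).

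The main obstacle, and the step I would be most careful about, is selecting the $OPT$ upper bound. A naive use of \eqref{bounds:OPT} at $j=k$ together with $v_{k+1}\le p_0/\varphi$ only works when $1-\sum_{i\le k}B_i/v_i\ge0$, and is too lossy in the ``oversubscribed'' regime where the greedy fills up before agent $k$; there a single agent can have a local ratio of $\varphi^2$, and only the global supply constraint rescues the bound. Passing to the LP dual with the intermediate value $\lambda=p_0/\varphi$ is exactly what lets the supply constraint couple the agents and makes the per-agent inequality tight at $\hat v=\tfrac1\varphi$. The rest is bookkeeping: verifying the extra-winner cap in Case~II and checking that the two cases really yield the identical reduced inequality.
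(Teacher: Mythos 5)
Your proof is correct in its overall architecture and takes a genuinely different route from the paper at the crucial step, the upper bound on $OPT$. The paper works only with the bound (\ref{bounds:OPT}) evaluated at specific indices: in Case I this forces a three-way sub-case analysis (according to whether $\sum_{i=1}^k B_i/v_i$ and $\sum_{i=k_1+1}^k B_i/v_i$ exceed $1$), using $j=k$, the trivial bound $OPT\le\sum_{i\le k}B_i$, or $j=k-1$ respectively; Case II is then handled separately, and the two cases end with two \emph{different} per-agent inequalities, proved by monotonicity of the functions $f(t)=(\varphi-\frac{1}{\varphi t})/t^2$ and $g(t)=(t+\frac{1}{\varphi})t$. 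Your Lagrangian bound $OPT\le\lambda+\sum_i B_i[1-\lambda/v_i]^+$ is valid for every $\lambda\ge0$ (and coincides with (\ref{bounds:OPT}) at $\lambda=v_{j+1}$); choosing the intermediate value $\lambda=p_0/\varphi$, which in general is not any $v_{j+1}$, gives a single bound that covers both cases and eliminates all of the paper's sub-cases at once. Both cases then collapse, as you show, to the one inequality $\hat v^2+\frac{1}{\varphi^2\hat v}\ge1$ on $[\frac{1}{\varphi},1)$, which your factorization $(\varphi\hat v-1)(\varphi\hat v^2+\hat v-1)\ge0$ proves correctly; it is in fact algebraically equivalent to both of the paper's inequalities $f(t)\le\varphi$ and $g(t)\ge\frac{1}{\varphi}$, so the two proofs meet at the same tight point $\hat v=\frac{1}{\varphi}$. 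Your route buys uniformity and brevity; the paper's buys staying entirely within the elementary bound it has already stated.

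There is one concrete error, local and easily repaired: the justification of the cap for agent $k+1$ in Case II. You claim that $v_{k+1}x_{k+1}\le B_{k+1}$ ``reduces to the Case II hypothesis $v_{k+1}>\frac{1}{\varphi}\sum_{i\le k}B_i$.'' It does not: that hypothesis says nothing about $B_{k+1}$, and writing $S=\sum_{i\le k}B_i$, the quantity $v_{k+1}x_{k+1}=\frac{\varphi v_{k+1}-S}{\varphi^2}$ can exceed any prescribed $B_{k+1}$ while the hypothesis holds (e.g.\ $S=\varphi$, $v_{k+1}=10$, $B_{k+1}=0.1$). What rules such configurations out --- and what the paper invokes when it writes that ``the inequality derives from the definition of $k$'' --- is the \emph{maximality} of $k$: since $k+1$ fails the defining condition, $\frac{1}{\varphi}(S+B_{k+1})>v_{k+1}$, hence $v_{k+1}x_{k+1}=\frac{\varphi v_{k+1}-S}{\varphi^2}<\frac{B_{k+1}}{\varphi^2}<B_{k+1}$. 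The Case II hypothesis is what you need for the different fact $x_{k+1}>0$. With that citation corrected, your argument goes through.
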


\begin{proof}
We prove for case I first. For $i \leq k_1$, since $v_i \ge p_0$, we have $v_i x_i = \frac{B_i v_i}{\sum_{j=1}^k B_j} \geq B_i$. For $k_1 < i \leq k$ we have $v_i x_i = \frac{B_i v_i}{\sum_{j=1}^k B_j} \min\{\hat{v}_i, 1\} \geq \hat{v}^2_i B_i$ as $\hat{v}_i\in [\frac{1}{\varphi}, 1)$ for $k_1 < i \leq k$. Thus

$$
\overline{W}(\mathbf{x}) = \sum_{i=1}^{k_1} B_i + \sum_{i=k_1+1}^k B_i \hat{v}^2_i.
$$

For optimal liquid welfare, we shall prove that
$$OPT\leq \sum_{i=1}^{k}B_i + (1-\sum_{i=k_1+1}^k \frac{B_i}{v_i})\frac{1}{\varphi} p_0 = \varphi \sum_{i=1}^{k_1}B_i + \sum_{i=k_1+1}^k B_i (\varphi  - \frac{1}{\varphi \hat{v}_i}).
$$

The equality part is by substituting $p_0=\sum_{i=1}^kB_i$, $\hat{v}_i = \frac{v_i}{\sum_{j=1}^k B_j}$ and  direct calculation. We prove the inequality by a case analysis in the following.
\begin{itemize}
\item $\sum_{i=k_1+1}^k \frac{B_i}{v_i} \leq \sum_{i=1}^k \frac{B_i}{v_i} \leq 1$. We use the bound (\ref{bounds:OPT}) for $OPT$ with $j=k$ and the fact that $v_{k+1}\leq \frac{1}{\varphi} p_0 $:
    \begin{align*}
     OPT  \leq \sum_{i=1}^k B_i + v_{k+1}(1-\sum_{i=1}^k \frac{B_i}{v_i})
     &\leq \sum_{i=1}^k B_i + v_{k+1}(1-\sum_{i=k_1+1}^k \frac{B_i}{v_i})   \\
     &  \leq \sum_{i=1}^{k}B_i +  (1-\sum_{i=k_1+1}^k \frac{B_i}{v_i})\frac{1}{\varphi} p_0
%
\end{align*}
%
%

\item $\sum_{i=k_1+1}^k \frac{B_i}{v_i} \leq 1< \sum_{i=1}^k \frac{B_i}{v_i}$. Then in the optimal solution, first $k$ agents are  not fully occupied(which means in the optimal solution, agent $k$'s budget is not used up). Thus nothing is allocated for agents $i \geq k+1$. In this case, we have $OPT \leq \sum_{i = 1}^k B_i \leq \sum_{i = 1}^k B_i + (1-\sum_{i=k_1+1}^k \frac{B_i}{v_i})\frac{1}{\varphi} p_0$ as the last term is non-negative.

\item $1<\sum_{i=k_1+1}^k \frac{B_i}{v_i} \leq \sum_{i=1}^k \frac{B_i}{v_i}$. We use the bound (\ref{bounds:OPT}) for $OPT$ with $j=k-1$:
  \begin{align*}
OPT  \leq \sum_{i=1}^{k-1} B_i + (1-\sum_{i=1}^{k-1} \frac{B_i}{v_i})v_k&= \sum_{i=1}^{k}  B_i  + (1-\sum_{i=1}^{k} \frac{B_i}{v_i})v_k \\&\leq \sum_{i=1}^{k} B_i  + (1-\sum_{i=k_1+1}^{k} \frac{B_i}{v_i})v_k \\& \leq \sum_{i=1}^{k}  B_i  + (1-\sum_{i=k_1+1}^{k} \frac{B_i}{v_i})\frac{1}{\varphi} p_0.
\end{align*}
     The last inequality uses the fact that $(1-\sum_{i=k_1+1}^{k} \frac{B_i}{v_i})<0$ and $v_k\geq \frac{1}{\varphi} p_0$.
\end{itemize}

To bound the liquid welfare of our auction, we need to give a good bound for $\hat{v}^2_i$ for agents $i\in [k_1+1, k]$. Noticing that for these agents, $\hat{v}_i \in [\frac{1}{\varphi}, 1]$, we shall prove that $\hat{v}^2_i \geq \frac{1}{\varphi} (\varphi-\frac{1}{\varphi \hat{v}_i})$.
To prove this, consider the following function
$$f(t) = \frac{\varphi  - \frac{1}{\varphi t}}{t^2}, t\in [\frac{1}{\varphi} ,1].$$
The derivative of $f(t)$ is
$$f'(t) = -2\frac{\varphi }{t^3}+3\frac{1}{\varphi t^4} = \frac{1}{t^4}[\frac{3}{\varphi} - 2\varphi t] \leq 0 \ \mbox{ when } \ t\in [\frac{1}{\varphi}, 1].$$
So $f(t)$ is monotone decreasing in interval $[\frac{1}{\varphi}, 1]$, and $f_{\max} = f(\frac{1}{\varphi}) = \varphi $.

By the property of $f$ and the fact that $\hat{v}_i\in [\frac{1}{\varphi}, 1)$ for $i=k_1+1, k_1+2,\ldots, k$, it is obvious that $\forall i \in [k_1+1, k]$, $ \hat{v}^2_i\geq \frac{1}{\varphi} (\varphi-\frac{1}{\varphi \hat{v}_i})$. Thus
\begin{align*}
\overline{W}(\mathbf{x}) \geq \sum_{i=1}^{k_1}B_i +  \sum_{i=k_1+1}^k B_i\hat{v}_i^2
&\geq \sum_{i=1}^{k_1}B_i + \sum_{i=k_1+1}^k B_i \frac{1}{\varphi}(\varphi  - \frac{1}{\varphi \hat{v}_i})\\
&=\frac{1}{\varphi}(\varphi \sum_{i=1}^{k_1}B_i + \sum_{i=k_1+1}^k B_i (\varphi  - \frac{1}{\varphi \hat{v}_i})) \\
&\geq \frac{1}{\varphi} OPT.
\end{align*}

This completes the proof for instances of case I and now we prove for instances of case II.  For $i \leq k_1$ since $v_i \geq \varphi v_{k+1}$ we have
$v_i x_i = v_i\frac{  B_i}{\varphi v_{k+1}}\min\{\hat{v}_i,1\}\geq B_i$.
For $k_1 < i \leq k$ we have
$v_i x_i = v_i\frac{  B_i}{\varphi v_{k+1}}\min\{\hat{v}_i,1\} = \frac{\hat{v}_i B_i }{\varphi v_{k+1}} \hat{v}_i\varphi  v_{k+1} = \hat{v}^2_i B_i<B_i$.
For $i = k+1$ we have
$$v_{k+1}x_{k+1} = v_{k+1}( \frac{1}{\varphi} - \sum_{i = 1}^k \frac{ B_i}{\varphi^2 v_{k+1}}) \leq \frac{1}{\varphi^2} B_{k+1}<B_{k+1}.$$
where the first inequality derives from the definition of $k$.

Thus, we can bound the liquid welfare as follows
\begin{align*}
\overline{W}(\mathbf{x})&= \sum_{i=1}^{k_1}B_i + \sum_{i=k_1+1}^k B_i\hat{v}^2_i+ v_{k+1}x_{k+1}\\
&= \sum_{i=1}^{k_1}B_i+ \sum_{i=k_1+1}^k B_i\hat{v}^2_i + \frac{1}{\varphi} v_{k+1} - \frac{1}{\varphi^2} \sum_{i=1}^{k_1}B_i - \frac{1}{\varphi^2}\sum_{i=k_1+1}^k B_i \\
&= (1-\frac{1}{\varphi^2}) \sum_{i=1}^{k_1}B_i  + \frac{1}{\varphi} v_{k+1} + \sum_{i=k_1+1}^k B_i (\hat{v_i}^2 - \frac{1}{\varphi^2}).
\end{align*}

For optimal liquid welfare, we use our bound (\ref{bounds:OPT}) with $j=k$ to get
$$OPT \leq \sum_{i=1}^{k}B_i+ (1-\sum_{i=k_1+1}^k \frac{B_i}{v_i})v_{k+1} = \sum_{i=1}^{k_1}B_i + v_{k+1} + \sum_{i = k_1+1}^k B_i(1-\frac{1}{\varphi\hat{v}_i}),
$$
where the equality part is by substituting $\hat{v}_i = \frac{ v_i}{\varphi v_{k+1}}$ and  direct calculation.
%
%
%
To bound the liquid welfare of our auction, we need to give a good bound for $\hat{v}^2_i$ for agents $i\in [k_1+1, k]$. Noticing that for these agents, $\hat{v}_i \in [\frac{1}{\varphi}, 1]$, we shall prove that $\hat{v}^2_i - \frac{1}{\varphi^2} \geq \frac{1}{\varphi}(1-\frac{1}{\phi \hat{v}_i})$.
To prove this, consider the following function

$$g(t) = (t^2-\frac{1}{\varphi^2})/(1-\frac{1}{\varphi t})= (t+\frac{1}{\varphi})t, t\in[\frac{1}{\varphi},1].$$
 It is clear that $g$ is monotone increasing on $t$ in the interval $[\frac{1}{\varphi},1]$, so $g_{\min} = g(\frac{1}{\varphi}) = 2\frac{1}{\varphi^2} > \frac{1}{\varphi}$.

By the property of $g$ and the fact that $\hat{v}_i\in [\frac{1}{\varphi}, 1)$ for $i=k_1+1, k_1+2,\ldots, k$, it is obvious that $\forall i \in [k_1+1, k]$, $ \hat{v}^2_i - \frac{1}{\varphi^2} \geq \frac{1}{\varphi}(1-\frac{1}{\phi \hat{v}_i})$. Thus
\begin{align*}
\overline{W}(\mathbf{x})&= (1-\frac{1}{\varphi^2})\sum_{i=1}^{k_1}B_i + \frac{1}{\varphi} v_{k+1} + \sum_{i=k_1+1}^k B_i (\hat{v}^2_i-\frac{1}{\varphi^2}) \\
&> \frac{1}{\varphi} \sum_{i=1}^{k_1}B_i + \frac{1}{\varphi} v_{k+1} + \frac{1}{\varphi}\sum_{i=k_1+1}^k B_i (1-\frac{1}{\varphi \hat{v}_i}) \\
& \geq \frac{1}{\varphi} OPT.
\end{align*}
This completes the proof for approximation ratio.
\end{proof}
To conclude this section, here we provide the following example showing that the analysis of our auction is tight.
\begin{exam}(Tightness)
Consider two agents with profiles $v_1 = 1$, $B_1 = \epsilon$ and $v_2 = \frac{1}{\varphi}$, $B_2 = 1-\epsilon$ where $\epsilon \in (0,1)$. It is easy to verify that $OPT = \epsilon + (1-\epsilon)\frac{1}{\varphi}$. For Auction \ref{GSUPA}, $\overline{W}(\mathbf{x})= B_1 + B_2\frac{1}{\varphi^2} = \epsilon + (1-\epsilon)\frac{1}{\varphi^2}$. When $\epsilon \rightarrow 0$, $\overline{W}(\mathbf{x})= \frac{1}{\varphi} OPT$
\end{exam}

\section{Private Budget}

In this section, we deal with the setting that agents' budgets are private information that the
auctioneer must design a mechanism which incentives agents to report their true values and budgets. We also study the general case that
 the valuation function $v_i(\cdot)$ for each agent $i$ could be any monotone non-decreasing function.

For a subset of agents $Q \subseteq [n]$, let $OPT(Q)$ denote the optimal liquid welfare for agents in group $Q$. Formally $OPT(Q) = \max_{x} \sum_{i\in Q} \min\{v_i(x_i), B_i\}$. In particular, let $OPT = OPT([n])$ which is our objective in this setting.

Our new auction for private budget model is presented in Auction~\ref{RS}, where the parameters
$\gamma>1, 0<\beta<\frac{1}{2}$ and $0\leq \mu \leq 1$ shall be specified later.

Basically, it is a combination of the following three basic auctions:
\begin{itemize}
  \item With probability of $\frac{\mu}{3}$, we run the first modified Vickrey auction.  Agent $i_1$ with highest  $\bar{v}_i=\min\{v_i(1), B_i\}$ gets the total unit of the good and needs to pay $p_{i_1} = \gamma\bar{v}_{i_2}>\bar{v}_{i_2}$ , which is strictly higher than the second highest $\bar{v}_i$. If agent $i_1$ is not willing to pay ($v_{i_1}<\gamma\bar{v}_{i_2}$) or does not have enough budget to pay ($B_{i_1}<\gamma\bar{v}_{i_2}$), we simply refuse to sell the item to any one.
    \item With probability of $\frac{2\mu}{3}$, we run the second modified Vickrey auction.  Agents are randomly divided into two groups $S$ and $T$. We only sell the total unit of the good to the first agent (with a prior fixed order) in group $S$ who is willing and able to pay the price $\frac{\max_{i\in T} \bar{v}_i}{\gamma}$. If there is no such agent in group $S$, we simply refuse to sell the item to any one.
  \item With the remaining probability of $1-\mu$, we run a random sampling auction.  Agents are randomly divided into two groups $S$ and $T$. We sell half of the good to agents in group $S$ with fixed price $\beta OPT(T)$ per unit. More precisely,
  for each agent in group $S$ with a prior fixed order, we simply offer a price $\beta OPT(T)$ per unit and let the agent get the most profitable fraction of the good within the availability of the good and budget of the agent.   This is precisely captured by the expression
  \[x_i \leftarrow \argmax \limits_{x\leq \min\{ x_S, \frac{B_i}{\beta OPT(T)}\} } \{v_i(x)-\beta OPT(T) x\}.\]
  If there are multiple $x$ that achieve the maximum, we choose the largest one.
  We do the same thing for agents in $T$ but with price  $\beta OPT(S)$.
\end{itemize}
We call the combination of the first two auctions the modified Vickrey auction and the third part as the random sampling auction.

%
%

\IncMargin{1em}
\begin{algorithm}[h]
\SetKwInOut{Input}{input}\SetKwInOut{Output}{output}
\Input{$n $ agents with values $v_1, \ldots,  v_n$ and budgets $B_1, \ldots, B_n$}
\Output{An allocation $(x_1, \ldots, x_n)$ and corresponding payments $(p_1, \ldots, p_n)$\;}
\BlankLine
\Begin{
\For{$i = 1$ \emph{\KwTo} $n$}{
		$x_i \leftarrow 0$,		$p_i \leftarrow 0$, $\bar{v}_i \leftarrow \min\{v_i(1), B_i\}$\;
	}
}
With probability of $\frac{\mu}{3}$
\Begin{
	$i_1 \leftarrow \arg \max_i \bar{v}_i $,	 $i_2 \leftarrow \arg \max_{i\ne i_1} \bar{v}_i$\;
	\If{$\bar{v}_{i_1} \geq \gamma \bar{v}_{i_2}$}{
		$x_{i_1} \leftarrow 1$,	$p_{i_1} \leftarrow \gamma\bar{v}_{i_2}$
	}
}
With probability of $\frac{2\mu}{3}$
\Begin
{
Randomly divide all agents with equal probability into set $S$ and $T$\;
$\bar{v}_T \leftarrow \max_{i\in T} \bar{v}_i$\;
    \ForAll{$i \in S$}{
        \If{$\bar{v}_i \geq \frac{\bar{v}_T}{\gamma}$ }
        {
            $x_i \leftarrow 1, p_i \leftarrow \frac{\bar{v}_T}{\gamma}$\;
            Halt
        }
    }
}
With  probability of $1-\mu$
\Begin{
Randomly divide all agents with equal probability into set $S$ and $T$,
$x_S \leftarrow \frac{1}{2}$,
$x_T \leftarrow \frac{1}{2}$\;

	\ForAll{$i \in S$}{
		$x_i \leftarrow \argmax \limits_{x\leq \min\{ x_S, \frac{B_i}{\beta OPT(T)}\} } \{v_i(x)-\beta OPT(T) x\}$\;
		$p_i \leftarrow  \beta OPT(T) x_i$\;
		$x_S \leftarrow  x_S - x_i$		
	}
	\ForAll{$i \in T$}{
		$x_i \leftarrow \argmax \limits_{x\leq \min\{ x_T, \frac{B_i}{\beta OPT(S)}\} } \{v_i(x)-\beta OPT(S) x \}$\;
		$p_i \leftarrow \beta OPT(S) x_i$\;
		$x_T \leftarrow x_T - x_i$		
	}

}

\caption{Random Sampling Auction for Private Budgets}\label{RS}
\end{algorithm}
\DecMargin{1em}

\begin{theorem}
\label{thmprimain}
Choosing $\beta =\frac{3}{10}, \gamma=\sqrt{\frac{10}{9}}$ and $\mu=\frac{5}{7}$, Auction~\ref{RS} is a truthful, budget feasible mechanism which guarantees liquid welfare of at least $\frac{1}{34}OPT$.
\end{theorem}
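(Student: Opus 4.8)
The plan is to treat the three properties separately, handling truthfulness and budget feasibility quickly and devoting the bulk of the work to the approximation ratio. Since Auction~\ref{RS} is a distribution over three deterministic sub-auctions, universal truthfulness and budget feasibility follow by checking each component. The random sampling part offers every agent $i\in S$ a fixed per-unit price $\beta\,OPT(T)$ that does not depend on $i$'s own report (it is a function of the opposite group $T$ only), and lets $i$ buy its utility-maximizing fraction subject to the supply and to the budget cap $B_i/(\beta OPT(T))$; this is a posted-price rule, hence truthful, and the cap enforces $p_i\le B_i$. The same holds for $T$ with price $\beta\,OPT(S)$. For the two modified Vickrey auctions I would argue that winning is a threshold event in $\bar v_i=\min\{v_i(1),B_i\}$ at a price independent of $i$'s own report ($\gamma\bar v_{i_2}$ in the first, $\bar v_T/\gamma$ in the second), and that the factor $\gamma$ together with the ``refuse to sell'' rule creates a strict gap that removes the tie-breaking/budget manipulation flagged in the introduction; since the winner always satisfies $\bar v_{\mathrm{win}}\ge(\text{price})$, the price never exceeds its budget.

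For the ratio I would first isolate three reusable facts. (i) \emph{Posted-price welfare}: if a group $Q$ is offered per-unit price $p$ with supply $s$, then its realized liquid welfare is at least $sp$ when the supply is exhausted (each payment is at most the agent's capped value, so welfare dominates revenue $sp$), and at least $OPT(Q)-p$ when it is not (then every agent bought its budget-constrained optimum, and comparing that optimum to the agent's allocation $x_i^\ast$ in the optimal solution for $Q$ gives $\min\{v_i(x_i^\ast),B_i\}-\min\{v_i(x_i),B_i\}\le p\,x_i^\ast$ by a short case analysis on whether $x_i^\ast\le B_i/p$, which sums to $OPT(Q)-\overline{W}(Q)\le p$). (ii) \emph{Superadditivity of the split}: restricting the optimal allocation for $OPT$ to $S$ and to $T$ gives $OPT(S)+OPT(T)\ge OPT$. (iii) \emph{Vickrey yield}: the first modified auction produces welfare exactly $\bar v_{(1)}$ whenever $\bar v_{(1)}\ge\gamma\bar v_{(2)}$, and the second produces welfare at least $\bar v_T/\gamma$ whenever a winner exists; tracking where the two largest agents land under the random split shows a winner with the favorable value exists on a constant-probability set of splits.

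With these in hand, I would run a case analysis driven by how concentrated $OPT$ is on the single largest capped value $a=\bar v_{(1)}$ (with $b=\bar v_{(2)}$), using $OPT\ge a$ throughout. If $a$ is a large enough fraction of $OPT$ and $a\ge\gamma b$, fact (iii) makes the first Vickrey auction alone contribute $\tfrac{\mu}{3}a=\Omega(OPT)$; if instead $a<\gamma b$ (the two top agents lie within a $\gamma$ factor, so the first auction refuses to sell), the split placing the larger of the two in $T$ and the other in $S$ gives price $a/\gamma$ to $S$, at which the second agent (value $b\ge a/\gamma$) qualifies, so the second Vickrey auction recovers at least $a/\gamma$ on a $\tfrac14$-fraction of splits, i.e.\ $\Omega(OPT)$ with weight $\tfrac{2\mu}{3}$. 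In the remaining regime $a$ is a small fraction of $OPT$, so no single agent dominates; here I would show the random split is balanced, i.e.\ both $OPT(S)$ and $OPT(T)$ retain a constant fraction of $OPT$ with constant probability, and then fact (i), applied on whichever side carries the lower price, extracts $\Omega(OPT)$ from the random sampling part (weight $1-\mu$). Combining the three regimes and substituting $\beta=\tfrac3{10}$, $\gamma=\sqrt{10/9}$, $\mu=\tfrac57$ should give $\tfrac1{34}OPT$.

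The main obstacle is the last regime. Because the valuations are arbitrary monotone functions, a dominant agent's partial allocation in random sampling can be worthless (all value may sit at $x=1$), so the top agent must be carried by the whole-item Vickrey auctions, and random sampling can only be trusted once single-agent dominance has been excluded by the case split. Quantifying ``balanced with constant probability'' then amounts to an anti-concentration statement for $\sum_{i\in S}w_i^\ast$ (the optimal contributions landing in $S$) under the assumption that every $w_i^\ast$ is a small fraction of $OPT$; a second-moment bound is the natural tool, but making the resulting probability, the dominance threshold, and the posted-price loss $-p$ all line up with the mixing weights $\tfrac{\mu}{3},\tfrac{2\mu}{3},1-\mu$ to reach exactly $\tfrac1{34}$ is delicate, and I expect this constant-balancing across the regimes — rather than any single inequality — to be where the real effort lies.
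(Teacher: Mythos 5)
Your high-level architecture coincides with the paper's: per-component truthfulness and budget feasibility, the ``welfare dominates revenue'' lemma, modified Vickrey auctions to cover the case where one capped value $\bar v_i$ dominates, and a second-moment (Chebyshev) balance argument for the random-sampling part. The genuine gap is your fact (i), and it fails in exactly the generality the theorem claims, namely arbitrary monotone valuations. The dichotomy ``supply exhausted $\Rightarrow$ welfare $\ge sp$; supply not exhausted $\Rightarrow$ every agent bought its budget-constrained optimum, so welfare $\ge OPT(Q)-p$'' is false, because an agent whose profitable quantity lies beyond the \emph{residual} supply at her turn may buy nothing at all, leaving supply unexhausted while she is nevertheless supply-constrained. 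Concretely, let group $T$ (supply $\tfrac12$, price $p$) contain three agents with $v_i(x)=V$ for $x\ge\tfrac13$ and $v_i(x)=0$ otherwise, $B_i=V$, where $V=\tfrac{2p}{3}$: the first agent buys exactly $\tfrac13$ (utility $p/3>0$), the residual $\tfrac16$ is useless to the other two, who buy $0$. The supply is not exhausted, the realized welfare is $V=\tfrac{2p}{3}$, yet $OPT(T)-p\ge 3V-p=p$, so the claimed bound fails; moreover each $\bar v_i\approx 0.1\,OPT$ under a balanced split, i.e.\ this sits squarely in the non-dominant regime that your proof delegates to random sampling, where the Vickrey terms alone cannot rescue the constant. (For concave or linear valuations your dichotomy is fine, since a supply-constrained agent then buys the entire residual; the failure is precisely the general-valuations feature that makes the theorem surprising.)

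The paper closes this hole with the device your proposal lacks: the demand $D_i(p)=\argmax_{x\le1}\{\bar v_i(x)-px\}$ defined with the \emph{capped} valuation $\bar v_i(x)=\min\{v_i(x),B_i\}$, which the paper explicitly flags as crucial. Capping forces $D_i(\beta OPT)\le \bar v_i/(\beta OPT)\le\alpha/\beta$, and Lemma~\ref{lemma:demand} shows the actual purchase is at least $D_i(p)$ whenever the residual supply permits, while Lemma~\ref{lw} plays the role of your ``$OPT(Q)-p$'' inequality at the level of these capped demands. The correct dichotomy is then: either every agent in the group receives at least her capped demand (welfare at least $W_T\ge\tfrac\beta2 OPT$ on the balanced event), or some agent is denied, in which case the residual supply is below $\alpha/\beta$ and the revenue --- hence the welfare, by Lemma~\ref{l1} --- is at least $\bigl(\tfrac12-\tfrac\alpha\beta\bigr)\beta\,OPT(S)$. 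With this substitution your plan goes through; your anti-concentration on the optimal contributions and the paper's on $W_S=\sum_{i\in S}\bar v_i(D_i(\beta OPT))$ are interchangeable Chebyshev arguments. Two smaller quantitative discrepancies: your accounting of the second Vickrey auction (probability $\tfrac14$, recovering $a/\gamma$) is lossier than the paper's (probability $\tfrac12$, recovering $a/\gamma^2$ in either split), and the paper avoids your three-regime case analysis by adding the Vickrey and sampling guarantees as explicit functions of $\alpha$ and minimizing the resulting quadratic $\tfrac54\alpha^2-\tfrac{29}{112}\alpha+\tfrac{3}{70}$; both refinements are needed to land on the constant $34$ rather than merely some constant.
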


\subsection{Truthfulness and Budget Feasibility}
Before we prove that our auction is truthful, we first point out that the ordinarily Vickrey auction (i.e. $\gamma=1$) on $\bar{v}_i=\min \{v_i(1), B_i\}$ is not truthful. Here is an example in which the valuation function is additive (thus we use $v_i$ to denote price per unit to illustrate): two agents with profiles $(v_1,B_1)=(v_2,B_2)=(2,1)$. If both of the agents bid truthfully, whatever the tie-breaking rule the Vickrey auction uses (even if we allow randomness), at least for one of the agents, the probability she/he gets the total unit of good is strictly less than $1$. For symmetry, we assume that the probability agent $1$ gets the total unit of good is strictly less than $1$. When agent $1$ does get some fraction of the good, she/he needs to pay $1$ per unit according to the Vickrey's rule. As a result, the expected utility of agent $1$ is strictly less than $2-1=1$. However, if agent $1$ bids $(v'_1,B'_1)=(2,1.5)$, she/he will get the total unit of the good for sure based on Vickrey auction and the payment is still $1$, which does not exceed the budget. Therefore, her/his utility become $1$, which is strictly better than bidding truthfully.

This is the reason why we need to modify the Vickrey auction. In the following, we prove that Auction~\ref{RS} in which our modification is applied, is universally truthful.

\begin{lemma}
Auction~\ref{RS} is universally truthful.
\end{lemma}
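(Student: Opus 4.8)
The plan is to apply the definition of universal truthfulness directly. Auction~\ref{RS} is a distribution over three deterministic sub-auctions (the two modified Vickrey auctions and the random-sampling auction), and its internal randomness also fixes which branch is run and the partition of agents into $S$ and $T$. It therefore suffices to condition on \emph{all} of these coin flips and show that each resulting deterministic auction is dominant-strategy truthful, i.e. that reporting the true pair $(v_i,B_i)$ is optimal for every agent $i$, where a report forcing a payment $p_i>B_i$ yields utility $-\infty$. In every branch I will use the same structural feature: the price an agent faces never depends on her own report, so each sub-auction reduces to a posted-price offer and I only need to check that honesty makes the agent transact exactly when the trade is (weakly) beneficial and affordable.

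For the first modified Vickrey branch, fix $i$ and set $\theta_i=\gamma\max_{j\ne i}\bar v_j$; since $\gamma>1$, agent $i$ wins the whole good and pays exactly $\theta_i$ iff her reported $\bar v_i\ge\theta_i$, where the threshold $\theta_i$ is independent of her own report. Under truth-telling she wins precisely when $\min\{v_i(1),B_i\}\ge\theta_i$, which guarantees both $v_i(1)\ge\theta_i$ (nonnegative utility) and $B_i\ge\theta_i$ (affordability), whereas any overreport that wins when truthfully $\bar v_i<\theta_i$ forces $p_i=\theta_i>\min\{v_i(1),B_i\}$, giving negative or $-\infty$ utility, and any underreport only forgoes a profitable trade. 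The second modified Vickrey branch is analogous once the partition is fixed: an agent $i\in S$ faces the posted unit price $\bar v_T/\gamma$ set entirely by group $T$, so she should take the good iff $\bar v_i\ge\bar v_T/\gamma$, exactly as honesty dictates, and I will note that the prior fixed order is report-independent so she can only control whether she clears the threshold at her turn; an agent $i\in T$ wins nothing and pays nothing, so her utility is identically $0$ and inflating $\bar v_T$ cannot help her.

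In the random-sampling branch, again fix the partition. Each $i\in S$ may purchase any quantity $x$ up to $\min\{x_S,B_i/\pi\}$ at the fixed unit price $\pi=\beta OPT(T)$, where both $OPT(T)$ and the residual supply $x_S$ reaching her turn depend only on the reports of others. For a fixed price and feasible interval the rule allocates the bundle maximizing her \emph{reported} surplus $v_i(x)-\pi x$, so a truthful report secures her favourite affordable bundle; misreporting the valuation can only induce a bundle that is weakly worse for her true surplus, overreporting the budget risks being allocated some $x>B_i/\pi$ and hence paying more than $B_i$ (utility $-\infty$), and underreporting it merely shrinks her options. The essential point is that although $i\in S$ affects $OPT(S)$, that quantity is only the price charged to $T$ and never enters $i$'s own utility, so there is no incentive to manipulate it; the roles of $S$ and $T$ are symmetric.

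The main obstacle---and the reason for the modification flagged by the two-agent example preceding the lemma---is the interaction of the budget constraint with the threshold and tie-breaking. I must verify that the $\gamma$-gap (refusing to sell unless $\bar v_{i_1}\ge\gamma\bar v_{i_2}$, and charging $\bar v_T/\gamma$ rather than $\bar v_T$) removes precisely the deviation that breaks ordinary Vickrey, namely inflating one's reported budget to capture the whole good at an unchanged price. Concretely, I will check in each branch that whenever a deviation changes the transaction it either pushes the price strictly above $\min\{v_i(1),B_i\}$ (making the trade unprofitable or infeasible) or abandons a strictly profitable trade, and that on the truthful path the payment always satisfies $p_i\le B_i$, so an honest agent never incurs the $-\infty$ penalty.
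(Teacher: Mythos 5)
Your proposal is correct and follows essentially the same route as the paper's proof: condition on all of the auction's internal randomness (branch choice and the $(S,T)$ partition), then verify that each of the three resulting deterministic auctions is truthful because every agent faces a price/threshold that is independent of her own report, so each sub-auction reduces to a take-it-or-leave-it offer that honesty accepts exactly when it is profitable and affordable. Your threshold formulation $\theta_i=\gamma\max_{j\ne i}\bar v_j$ and your explicit treatment of budget over/under-reporting in the random-sampling branch are slightly more detailed than the paper's wording, but the underlying argument is the same.
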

\begin{proof}
The auction is a probabilistic combination of three auctions. For the second and third auctions, it also uses random bits to do the partition of $(S,T)$.  We only need to prove that all of them are truthful when these partitions are fixed.

For the first modified Vickrey auction on value $\bar{v}_i=\min \{v_i(1), B_i\}$, two cases may happen:
\begin{itemize}
\item $\bar{v}_{i_1} \geq \gamma \bar{v}_{i_2}$. In this case, for any agent $j$ other than $i_1$, $\bar{v}_j \leq \bar{v}_{i_1}< \bar{v}_{i_1}\gamma $. If $j$ wants to change her/his bid to become the winner, she/he needs to pay $\bar{v}_{i_1}\gamma$, which is strictly greater than either her/his true budget or true value. This leads to not enough budget or negative utility. For $i_1$, she/he does not have the incentive to change value or budget, since the payment is decided by $i_2$, which is no larger than her/his value and also within her/his budget.
\item $\bar{v}_{i_1} < \gamma \bar{v}_{i_2}$. In this case, no one is the winner. For any agent $j$ other than $i_1$, it is the same argument as before.  For $i_1$, if she/he wants to change her/his bid to become the winner, then she/he needs to pay $\gamma \bar{v}_{i_2}$, which is strictly greater than either her/his true budget or true value. This leads to not enough budget or negative utility.
\end{itemize}

Now we prove for the second modified Vickrey auction. For any agent in $T$, she/he does not get any fraction of the good regardingless of her/his bid. So, they do not have incentive to lie.  For any agent in $S$,
she/he cannot change the price per unit or her/his position in the order by changing her/his bid. When an agent in $S$ has chance to get the good, it is simply a take-it-or-leave-it offer
 with fixed price. So, they do not have incentive to lie.

For the random sampling auction part, each agent cannot change her/his price per unit or position by changing her/his bid, and given a fixed price and position, a agent has already got the most profitable fraction of the good. Therefore, agents do not have incentive to change their bids.

Thus, all the three auctions above are truthful. This concludes the proof for universally truthfulness.
\end{proof}

\begin{lemma}
Auction~\ref{RS} is budget feasible.
\end{lemma}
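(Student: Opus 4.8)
The plan is to exploit the fact that Auction~\ref{RS} is a probabilistic mixture of three deterministic auctions: budget feasibility of the mixture follows once we verify $p_i \le B_i$ for every (truth-telling) agent in each of the three components separately. In each component at most one agent makes a nonzero payment — one winner in the two Vickrey auctions, and at most one charged agent per group in the random sampling auction — while every other agent pays nothing, so it suffices to check the charged agents. Throughout I would use that a truthful agent reports her true $B_i$, and that by definition $\bar v_i = \min\{v_i(1), B_i\} \le B_i$.

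For the first modified Vickrey auction, a payment occurs only under the winning condition $\bar v_{i_1} \ge \gamma \bar v_{i_2}$, in which case the winner $i_1$ pays $\gamma \bar v_{i_2}$. I would chain
\[
\gamma \bar v_{i_2} \;\le\; \bar v_{i_1} \;=\; \min\{v_{i_1}(1), B_{i_1}\} \;\le\; B_{i_1},
\]
where the first inequality is exactly the winning condition and the last step is the definition of $\bar v$. For the second modified Vickrey auction, a charged agent $i \in S$ satisfies $\bar v_i \ge \frac{\bar v_T}{\gamma}$ and pays $\frac{\bar v_T}{\gamma}$, so the identical chain $\frac{\bar v_T}{\gamma} \le \bar v_i = \min\{v_i(1), B_i\} \le B_i$ gives the bound. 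This is precisely why the thresholds $\gamma$ and $1/\gamma$ were introduced: they guarantee the price never exceeds the winner's capped value.

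For the random sampling auction, the charged agent $i \in S$ pays $p_i = \beta\, OPT(T)\, x_i$, and the allocation rule confines $x_i$ to the feasible set $x \le \min\{x_S, \frac{B_i}{\beta\, OPT(T)}\}$. Hence $x_i \le \frac{B_i}{\beta\, OPT(T)}$ and substituting yields $p_i \le \beta\, OPT(T)\cdot \frac{B_i}{\beta\, OPT(T)} = B_i$ immediately; the symmetric argument handles group $T$ with price $\beta\, OPT(S)$. There is no genuine obstacle here — each case reduces to a one-line computation because the budget cap is built directly into the allocation's feasible set in the random sampling part and into the acceptance threshold in the Vickrey parts. The only point requiring a word of care is the degenerate case $OPT(T) = 0$ (or $OPT(S) = 0$), where the per-unit price $\beta\, OPT(T)$ is zero; there the payment $p_i = 0 \le B_i$ holds trivially, so the claim stands in all cases.
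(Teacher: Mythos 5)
Your proof is correct and takes essentially the same route as the paper's: check each of the three component auctions separately, bounding the Vickrey payments by the winning threshold ($\gamma \bar v_{i_2} \le \bar v_{i_1} \le B_{i_1}$, resp.\ $\frac{\bar v_T}{\gamma} \le \bar v_i \le B_i$) and the random-sampling payment via the budget cap $\frac{B_i}{\beta\, OPT(T)}$ built into the allocation's feasible set. One small inaccuracy in your setup: in the random sampling part \emph{every} agent in $S$ (resp.\ $T$) may receive a nonzero allocation and pay, not ``at most one charged agent per group'' --- but this is harmless, since your bound $p_i \le \beta\, OPT(T)\cdot \frac{B_i}{\beta\, OPT(T)} = B_i$ is verified per agent and applies to each of them.
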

\begin{proof}
For the first modified Vickrey auction, if no one wins, everyone's payment is zero; if $i_1$ wins, then she/he pays $\gamma \bar{v}_{i_2}\leq \bar{v}_{i_1}= \min \{v_{i_1}(1), B_{i_1}\}\leq B_{i_1}$.
For the second modified Vickrey auction, if no one wins, everyone's payment is zero; if $i^\ast$ wins, then she/he pays $\gamma \frac{\bar{v}_T}{\gamma}\leq \bar{v}_{i^\ast}= \min \{v_{i^\ast}(1), B_{i^\ast}\} \leq B_{i^\ast}$.

For random sampling auction and agent $i\in S$,
			$p_i = x_i \cdot \beta OPT(T) \leq \min\{ x_S, \frac{B_i}{\beta OPT(T)}\} \cdot \beta OPT(T)  \leq \frac{B_i}{\beta OPT(T)} \cdot \beta OPT(T) =B_i. $
Same thing also holds for agents in $T$.

This concludes the proof for budget feasibility.
\end{proof}

\subsection{Approximation Ratio Analysis}
We first prove the following lemma, which bounds the liquid welfare of the auction by its revenue. This is useful in our analysis.
\begin{lemma}
\label{l1}
Liquid welfare  produced by any truthful and budget feasible mechanism is at least the revenue of the auctioneer.
\end{lemma}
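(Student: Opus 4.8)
The plan is to prove the inequality agent-by-agent and then sum. Concretely, I would show that for every bidder $i$ who reports truthfully the payment satisfies $p_i \leq \min\{v_i(x_i), B_i\}$; summing this over all agents immediately gives $\sum_i p_i \leq \sum_i \min\{v_i(x_i), B_i\} = \overline{W}(\mathbf{x})$, which is precisely the statement that the auctioneer's revenue is at most the liquid welfare of the realized allocation.

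Two ingredients feed into the per-agent bound. The first, $p_i \leq B_i$, is just budget feasibility applied to the truth-telling agent $i$, so it is immediate. The second is individual rationality, $p_i \leq v_i(x_i)$, and this is where the truthfulness hypothesis does the real work. To establish it, I would argue that truthful reporting always yields non-negative utility. The key observation is that every agent has access to a \emph{null} report that forces zero payment: for instance, declaring a budget of $0$ makes budget feasibility require $p_i \leq 0$, and since payments are non-negative by assumption, the payment under this deviation is exactly $0$, so the resulting utility equals $v_i(x_i') - 0 = v_i(x_i') \geq 0$ (valuations being non-negative). Because truthful reporting is a dominant strategy, the utility under truth-telling is at least the utility under this deviation, hence $v_i(x_i) - p_i \geq 0$, i.e. $p_i \leq v_i(x_i)$. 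In the single-parameter public-budget setting the same conclusion also follows directly from the Myerson payment formula of Lemma~\ref{lemmyerson}, since the integral term there is non-negative.

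Combining the two bounds gives $p_i \leq \min\{v_i(x_i), B_i\}$ for each $i$, and the final summation closes the argument. I expect the individual-rationality step to be the only real obstacle: one must make sure the opt-out report genuinely drives the payment to zero in the model at hand, so that truthful utility is pinned to be non-negative. Once that is secured, the remainder is a routine term-by-term comparison followed by a single summation.
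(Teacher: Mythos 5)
Your proof is correct and takes essentially the same route as the paper: bound each agent's payment by $\min\{v_i(x_i),B_i\}$ --- budget feasibility giving $p_i\le B_i$ and truthfulness giving individual rationality $p_i\le v_i(x_i)$ --- and then sum over agents. The only difference is that the paper asserts $p_i\le v_i(x_i)$ ``by truthfulness'' without elaboration, whereas you justify it explicitly via the opt-out (zero-budget) report and, for the single-parameter case, the Myerson payment formula; this is a harmless elaboration of the same step, and since the lemma is invoked in the private-budget setting the zero-budget deviation you rely on is indeed available.
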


\begin{proof}
For an allocation $\mathbf{x} = (x_1, x_2, \ldots, x_n)$ and payment $\mathbf{p} = (p_1, p_2, \ldots, p_n)$ given by such a mechanism,  we have $v_i (x_i) \geq p_i$ by truthfulness and $B_i \geq p_i$ by budget feasibility. So the liquid welfare $\overline{W}(x) = \sum_i \min\{v_i (x_i), B_i\} \geq \sum_i p_i$.
\end{proof}

Based on Lemma~\ref{l1}, we prove that the modified Vickrey auction part performs well when $\max_i \bar{v}_i$ is large.
\begin{lemma}\label{bound_vickrey}
Let $\gamma=\sqrt{\frac{10}{9}}$. Then the modified  Vickrey auction part get expect liquid welfare of $\frac{3\mu}{10} \max_i \bar{v}_i$.
\end{lemma}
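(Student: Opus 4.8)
The plan is to write $M=\max_i \bar{v}_i$ and to split the analysis according to the exact condition that governs the first modified Vickrey auction. Let $a$ be an agent attaining $\bar{v}_a=M$, let $m$ be the second largest value among $\{\bar{v}_i\}$, attained by some agent $b$, and distinguish the \emph{large-gap} case $M\geq\gamma m$ from the \emph{small-gap} case $M<\gamma m$. In both cases I will lower bound the expected liquid welfare of the combined (first plus second) modified Vickrey auction by $\frac{3\mu}{10}M$, exploiting that liquid welfare is nonnegative so that it suffices to exhibit one of the two auctions carrying the whole load.

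In the large-gap case the first auction alone suffices. Its winning condition $\bar{v}_{i_1}\geq\gamma\bar{v}_{i_2}$ is exactly $M\geq\gamma m$, so agent $a$ wins the whole good and contributes liquid welfare $\min\{v_a(1),B_a\}=\bar{v}_a=M$. Since this auction runs with probability $\frac{\mu}{3}$ and $\frac{1}{3}\geq\frac{3}{10}$, its expected contribution is $\frac{\mu}{3}M\geq\frac{3\mu}{10}M$, and the second auction only adds a nonnegative amount.

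In the small-gap case the first auction yields nothing, so the whole bound must come from the second auction, which runs with probability $\frac{2\mu}{3}$. The crucial observation is that whenever some agent in $S$ wins, the winner's value is at least the posted threshold $\bar{v}_T/\gamma$, so the liquid welfare of this auction is at least $\bar{v}_T/\gamma$ (the identity of the winner within the fixed order is irrelevant). I then condition on where the two top agents $a,b$ land in the random partition. In the event $\{a\in T,\,b\in S\}$ (probability $\tfrac14$) we have $\bar{v}_T=M$, and $b$ qualifies because $m>M/\gamma$ in the small-gap case, so the liquid welfare is at least $M/\gamma$. In the event $\{a\in S,\,b\in T\}$ (probability $\tfrac14$) we have $\bar{v}_T=m$ and $a$ qualifies since $M\geq m>m/\gamma$, so the liquid welfare is at least $m/\gamma>M/\gamma^2$. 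Using $\gamma^2=\tfrac{10}{9}$, both quantities exceed $\tfrac{9}{10}M$, so the expected liquid welfare of the second auction is at least $\tfrac12\cdot\tfrac{9}{10}M=\tfrac{9}{20}M$, and multiplying by the run probability $\tfrac{2\mu}{3}$ gives exactly $\frac{2\mu}{3}\cdot\frac{9}{20}M=\frac{3\mu}{10}M$.

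The main work is the small-gap case, and the subtlety to get right is that the second auction only guarantees that the \emph{winner's value is at least the threshold}, not that a high-value agent actually wins; this is why the bound is driven by $\bar{v}_T/\gamma$ rather than by $\max_{i\in S}\bar{v}_i$, and why the two ``crossed'' placements of $a$ and $b$ are precisely the ones that simultaneously pin $\bar{v}_T$ to a large value and guarantee a qualifying bidder in $S$. Finally I would dispose of the trivial edge cases (fewer than two agents, under the convention $\bar{v}_{i_2}=0$, which falls into the large-gap case; ties at the top are covered verbatim by the small-gap analysis).
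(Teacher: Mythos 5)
Your proof is correct and follows essentially the same route as the paper: split on whether $\bar{v}_{i_1}\geq\gamma\bar{v}_{i_2}$, let the first auction carry the large-gap case with contribution $\frac{\mu}{3}M$, and in the small-gap case condition on the two top agents being separated (probability $\tfrac12$, your two crossed placements of probability $\tfrac14$ each) to guarantee the second auction sells at value at least $M/\gamma^2=\tfrac{9}{10}M$, yielding $\frac{2\mu}{3}\cdot\frac12\cdot\frac{9}{10}M=\frac{3\mu}{10}M$. The only cosmetic difference is that you bound the winner's liquid welfare directly by the posted threshold, whereas the paper routes this step through its revenue-to-welfare lemma (Lemma~\ref{l1}); the two arguments are equivalent here.
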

We note that by choosing $\gamma$ arbitrarily close to $1$, we can get liquid welfare arbitrarily close to $\frac{\mu}{3} \max_i \bar{v}_i$.
We choose the above value for the notational simplicity of the presentation.
\begin{proof}
If the highest two $\bar{v}_i$ are not relatively close to each other, namely $\bar{v}_{i_1} \geq \gamma \bar{v}_{i_2}$ ($i_1, i_2$ are as defined in Auction~\ref{RS}). Then the first modified Vickrey auction successes and gets expect liquid welfare of $\frac{\mu}{3} \max_i \bar{v}_i >\frac{3\mu}{10} \max_i \bar{v}_i$.

If the highest two $\bar{v}_i$ are relatively close to each other, namely $\bar{v}_{i_1} < \gamma \bar{v}_{i_2}$. Then in the second modified Vickrey auction, with probability $\frac{1}{2}$ these two agents $i_1, i_2$  are put into different groups. When this event occurs, the second modified Vickrey successes and gains a revenue of at least $\frac{\max_i \bar{v}_i}{\gamma^2}$. According to Lemma~\ref{l1}, liquid welfare extract by the second modified auction is greater than this revenue, thus contributes at least $\frac{\max_i \bar{v}_i}{\gamma^2}$ liquid welfare. Therefore, the  expect liquid welfare in this case is at least $\frac{2\mu}{3} \frac{1}{2} \frac{\max_i \bar{v}_i}{\gamma^2} =\frac{3\mu}{10} \max_i \bar{v}_i$.
\end{proof}

If $\max_i \bar{v}_i$ is already a significant fraction of the optimal solution, we are already done. In the following, we shall prove that the random sampling auction performs well when $\max_i \bar{v}_i$ is small.
 We first give some definitions. Let $\bar{v}_i(x)=\min\{v_i(x), B_i\}$ be the capped valuation for agent $i$.
 Then, for any allocation $\mathbf{x}=(x_1, \ldots, x_n)$ , we have $\overline{W}(\mathbf{x})=\sum_i \bar{v}_i(x_i)$. The following notion plays an important role in our analysis. We define
 \[D_i(p) = \argmax_{x\leq 1} \{\bar{v}_i(x)-x p \}. \]
 If there are multiple $x$ that achieve the maximum, we choose the largest one. It is very crucial that we use $\bar{v}_i(x)$ rather than $v_i(x)$ in the definition of $D_i(p)$. By this definition, we can directly see that for any $p>0$ and $x<D_i(p)$, we have $D_i(p)\leq \frac{B_i}{p}$ and  $v_i(x)< B_i$.  This $D_i(p)$ also gives a lower bound of agent $i$'s demand if there are enough availability of the good. Formally, we have the following lemma.
\begin{lemma}\label{lemma:demand}
Let $p > 0$ and $D_i(p) \leq X$. Then
\[\argmax \limits_{x\leq \min\{ X, \frac{B_i}{p}\} } \{v(x)-xp \} \ge D_i(p) .\]
\end{lemma}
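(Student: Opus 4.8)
The plan is to show that $D_i(p)$ is feasible for the left-hand maximization, and that at or below $D_i(p)$ the uncapped objective $v_i(x)-xp$ agrees with (or is only helped relative to) the capped objective $\bar v_i(x)-xp$ that defines $D_i(p)$; the ``largest maximizer'' tie-breaking then forces the uncapped maximizer up to at least $D_i(p)$. First I would check feasibility. By hypothesis $D_i(p)\le X$, and the remark preceding the lemma gives $D_i(p)\le \frac{B_i}{p}$, so $D_i(p)\le \min\{X,\frac{B_i}{p}\}$ and hence $D_i(p)$ lies in the feasible set of the left-hand $\argmax$.

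Next, write $h(x)=v_i(x)-xp$ for the uncapped objective and $g(x)=\bar v_i(x)-xp$ for the capped one. The key structural fact, again from the preceding remark, is that $v_i(x)<B_i$ for every $x<D_i(p)$, so $\bar v_i(x)=v_i(x)$ and therefore $h(x)=g(x)$ on $[0,D_i(p))$. At the point $D_i(p)$ itself capping can only decrease the value, so $h(D_i(p))=v_i(D_i(p))-D_i(p)p\ge \bar v_i(D_i(p))-D_i(p)p=g(D_i(p))$. Since $D_i(p)$ maximizes $g$ over $[0,1]$ by definition, I obtain $h(D_i(p))\ge g(D_i(p))\ge g(y)=h(y)$ for every $y<D_i(p)$; that is, $D_i(p)$ weakly beats every smaller feasible point under the uncapped objective.

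Finally I would conclude by a tie-breaking argument. Let $x^{\ast\ast}$ denote the largest maximizer of $h$ over the feasible set. If $x^{\ast\ast}<D_i(p)$, then optimality of $x^{\ast\ast}$ gives $h(x^{\ast\ast})\ge h(D_i(p))$, while the previous step gives $h(D_i(p))\ge h(x^{\ast\ast})$; hence $D_i(p)$ is also a maximizer, contradicting that $x^{\ast\ast}$ is the largest one. Therefore $x^{\ast\ast}\ge D_i(p)$, which is exactly the claim.

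The only delicate point, and where I expect to spend the most care, is keeping the capped/uncapped distinction straight: the inequality $h(D_i(p))\ge g(D_i(p))$ points in the helpful direction precisely because capping lowers the objective, whereas below $D_i(p)$ the two objectives must coincide \emph{exactly}, so that optimality of $D_i(p)$ for the capped problem transfers to the uncapped problem on $[0,D_i(p))$. Everything else is the routine observation that a ``largest maximizer'' cannot be strictly dominated in value by a larger feasible point.
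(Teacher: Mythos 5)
Your proposal is correct and follows essentially the same route as the paper: feasibility of $D_i(p)$ via $D_i(p)\le \min\{X,\frac{B_i}{p}\}$, the observation that $v_i(x)<B_i$ below $D_i(p)$ makes the capped and uncapped objectives coincide there, the chain $v_i(x)-xp=\bar v_i(x)-xp\le \bar v_i(D_i(p))-D_i(p)p\le v_i(D_i(p))-D_i(p)p$, and the largest-maximizer tie-breaking to conclude. Your explicit contradiction argument in the final step merely spells out what the paper states in one sentence; there is no substantive difference.
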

 \begin{proof}
 The left hand side of the inequality is agent $i$'s most profitable fraction given price $p$ per unit and the total availability of good of $X$. For $x<D_i(p) \leq \min( X, \frac{B_i}{p}) $, we have
 \[ v(x)-xp = \bar{v}_i(x)-x p \leq \bar{v}_i(D_i(p))-D_i(p) p \leq v_i(D_i(p))-D_i(p) p.\]
 The first equality uses the fact that $v_i(x)< B_i$ for  $x<D_i(p)$; the first inequality uses the definition of  $D_i(p)$; and the last inequality uses the fact that $\bar{v}_i(x)\leq v_i(x)$ for any $x$.
 Since we always break ties in favor of larger $x$, the maximum of  $v(x)-xp$ in the LHS is archived by $x\ge D_i(p)$. This completes the proof.
 \end{proof}

Let $W(p) = \sum_i \bar{v}_i(D_i(p))$. The intuition for this notion is that with fixed price $p$, $W(p)$ gives the maximum liquid welfare from all agents. We present the following lemma giving an lower bound for this notion using $OPT$ and fixed price $p$.

 \begin{lemma}\label{lw}
 For any $p\geq 0$, $W(p) \geq OPT-p$.
 \end{lemma}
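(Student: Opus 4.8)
The plan is to compare $W(p)$ against $OPT$ agent-by-agent, exploiting that $D_i(p)$ is defined as a per-agent maximizer subject only to $x\leq 1$, with no coupling across agents. First I would fix an optimal allocation $\mathbf{x}^\ast = (x_1^\ast, \ldots, x_n^\ast)$ realizing $OPT$, so that $\sum_i x_i^\ast \leq 1$ and $OPT = \sum_i \bar{v}_i(x_i^\ast)$. A preliminary observation is that feasibility together with nonnegativity forces $0 \leq x_i^\ast \leq 1$ for every $i$, so each $x_i^\ast$ is an admissible competitor in the optimization defining $D_i(p)$.

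Next, for each $i$ I would invoke the defining inequality of $D_i(p)$ against the competitor $x_i^\ast$, namely
\[\bar{v}_i(D_i(p)) - p\, D_i(p) \geq \bar{v}_i(x_i^\ast) - p\, x_i^\ast.\]
Since $p \geq 0$ and $D_i(p) \geq 0$, the term $p\, D_i(p)$ is nonnegative and may be dropped, yielding $\bar{v}_i(D_i(p)) \geq \bar{v}_i(x_i^\ast) - p\, x_i^\ast$. Summing over all $i$ and using the supply constraint $\sum_i x_i^\ast \leq 1$ then gives
\[W(p) = \sum_i \bar{v}_i(D_i(p)) \geq \sum_i \bar{v}_i(x_i^\ast) - p \sum_i x_i^\ast \geq OPT - p,\]
which is exactly the claim.

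The point that deserves a remark rather than a difficulty is that $W(p)$ sums the individual demands $D_i(p)$ with no global feasibility constraint, so $\sum_i D_i(p)$ may well exceed one unit of the good; this is harmless because $W(p)$ is only an analytic upper bound on achievable welfare at a fixed price $p$, not itself a realizable allocation. The whole argument reduces to a single application of the pointwise optimality of $D_i(p)$ combined with $\sum_i x_i^\ast \le 1$, so I expect no real obstacle and no case analysis; the only subtlety to record is verifying that each $x_i^\ast$ respects the $x\le 1$ bound that the definition of $D_i(p)$ imposes.
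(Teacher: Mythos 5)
Your proof is correct and is essentially identical to the paper's: both fix an optimal allocation, apply the defining optimality of $D_i(p)$ against each $x_i^\ast$ as a competitor, drop the nonnegative term $p\,D_i(p)$, and sum using $\sum_i x_i^\ast \le 1$. Your added remarks (that each $x_i^\ast$ is admissible, and that $W(p)$ need not be a feasible allocation) are correct clarifications but do not change the argument.
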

 \begin{proof}
 Let $(x_1, \ldots, x_n)$ be an instance of optimal allocation. By the definition of $D_i(p)$, we have
$\bar{v}_i(D_i(p))-D_i(p) p \ge \bar{v}_i(x_i)-x_i p$ ,
 and thus $\bar{v}_i(D_i(p)) \ge \bar{v}_i(x_i)-x_i p$.
Sum up these inequalities for all $i\in [n]$, we get
 \[ \sum_i \bar{v}_i(D_i(p)) \ge \sum_i  (\bar{v}_i(x_i)-x_i p) =OPT -p \sum_i x_i \ge  OPT-p.\]
 \end{proof}

The following facts on relationship between $OPT(S)$, $OPT(S)$ and $OPT$, are obvious.

\begin{lemma}
\label{l2}
Let $(S,T)$ be a partition of $[n]$. Then
$OPT(S)\leq OPT$, $OPT(T)\leq OPT$ and $OPT(S)+OPT(T)\geq OPT$.
\end{lemma}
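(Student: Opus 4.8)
The plan is to prove the three inequalities by directly converting an optimal allocation of one subproblem into a feasible allocation of another, exploiting the fact that the liquid welfare objective $\sum_i \min\{v_i(x_i),B_i\}$ decomposes additively across agents and that feasibility (total allocated good at most one unit) is preserved under both restriction and zero-extension of allocations.

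First I would establish the two monotonicity inequalities $OPT(S)\leq OPT$ and $OPT(T)\leq OPT$. Let $\mathbf{x}^\ast$ be an optimal allocation witnessing $OPT(S)$, so it assigns positive shares only to agents in $S$ and satisfies $\sum_{i\in S}x^\ast_i\leq 1$. I would extend it to an allocation on $[n]$ by setting $x_i=0$ for every $i\notin S$. This extended allocation is still feasible, since the total allocated good is unchanged, and an agent receiving nothing contributes $\min\{v_i(0),B_i\}=0$ to the liquid welfare; hence the extension attains liquid welfare exactly $OPT(S)$ over $[n]$. Because $OPT$ is the maximum over all feasible allocations of $[n]$, we get $OPT\geq OPT(S)$, and the argument for $T$ is identical.

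For the remaining inequality $OPT(S)+OPT(T)\geq OPT$, I would start instead from an optimal allocation $\mathbf{x}^\ast$ witnessing $OPT$. Its restriction to the coordinates in $S$ is a feasible allocation for the $S$-subproblem, since $\sum_{i\in S}x^\ast_i\leq\sum_{i\in[n]}x^\ast_i\leq 1$, so $OPT(S)\geq\sum_{i\in S}\min\{v_i(x^\ast_i),B_i\}$, and symmetrically $OPT(T)\geq\sum_{i\in T}\min\{v_i(x^\ast_i),B_i\}$. Adding these two bounds and using that $(S,T)$ is a partition of $[n]$, so that $S$ and $T$ together cover every agent exactly once, the two right-hand sides sum to $\sum_{i\in[n]}\min\{v_i(x^\ast_i),B_i\}=OPT$, which yields the claim.

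There is essentially no real obstacle here; the statement is a routine consequence of the per-agent additive structure of liquid welfare. The only point deserving a moment of care is the implicit normalization $v_i(0)=0$ (equivalently $\min\{v_i(0),B_i\}=0$), which is what guarantees that padding an allocation with zero-share agents leaves its liquid welfare unchanged; this holds under the standard convention for the monotone non-decreasing valuation functions used throughout the paper, so the paper is justified in calling the lemma obvious.
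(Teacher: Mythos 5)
Your proof is correct: the paper states this lemma without any proof at all (declaring the facts ``obvious''), and your zero-extension and restriction arguments are precisely the routine verification it has in mind, so the two approaches coincide in substance. One small remark: the normalization $v_i(0)=0$ that you flag as the delicate point is not actually needed for any of the three inequalities --- padding an allocation with zero-share agents can only add nonnegative terms $\min\{v_i(0),B_i\}\geq 0$ to the welfare, which still yields $OPT\geq OPT(S)$, and the restriction argument for $OPT(S)+OPT(T)\geq OPT$ never invokes it.
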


By choosing $p=\beta OPT$ in Lemma \ref{lw}, we get that
\[W(\beta OPT) = \sum_i \bar{v}_i(D_i(\beta OPT)) \geq OPT-\beta OPT=(1-\beta) OPT.\]
  This is a constant fraction of $OPT$. Since $OPT(S), OPT(T)\leq OPT$, the fraction of good demanded by agent $i$ in the random sampling auction is at least $D_i(\beta OPT)$ by Lemma \ref{lemma:demand} providing that there are enough fraction of the good remains.
This is a good approximation of the optimal liquid welfare when each of $\bar{v}_i(D_i(\beta OPT))$ is small. Let $W = W(\beta OPT)$, $W_S=\sum_{i\in S} \bar{v}_i(D_i(\beta OPT))$ and $W_T=\sum_{i\in T} \bar{v}_i(D_i(\beta OPT))$.
 We first prove that both sets $S$ and $T$ get significant amount of demands at fixed price $\beta OPT$ with high probability in this case.

\begin{lemma}\label{lemma:bound}
If $\max_{i\in [n]} \bar{v}_i(D_i(\beta OPT)) \leq \alpha \cdot OPT $, then
$$Pr(W_S, W_T \ge \frac{\beta}{2} OPT ) \geq 1-\frac{\alpha (1-\beta)}{(1-2 \beta )^2}.$$
\end{lemma}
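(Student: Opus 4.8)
The key quantities are $W_S$ and $W_T$, which are sums over a random partition. Each agent $i$ contributes $\bar{v}_i(D_i(\beta OPT))$ to exactly one of $W_S, W_T$ depending on which group it lands in. Since each agent goes to $S$ with probability $1/2$ independently, $W_S$ is a sum of independent random variables: agent $i$ contributes $w_i := \bar{v}_i(D_i(\beta OPT))$ with probability $1/2$ and $0$ otherwise. So $\mathbb{E}[W_S] = W/2$ where $W = \sum_i w_i = W(\beta OPT) \geq (1-\beta)OPT$.

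**Plan via Chebyshev's inequality.** I want to show $W_S \geq \frac{\beta}{2}OPT$ and $W_T \geq \frac{\beta}{2}OPT$ with the claimed probability. Since $W_T = W - W_S$, the bad event is $\{W_S < \frac{\beta}{2}OPT\} \cup \{W_S > W - \frac{\beta}{2}OPT\}$, i.e., $W_S$ deviates too far from its mean in either direction. The mean is $W/2 \geq \frac{(1-\beta)}{2}OPT$. The threshold $\frac{\beta}{2}OPT$ is below the mean (since $\beta < 1/2 < 1-\beta$), so the deviation needed is $|W_S - W/2| \geq W/2 - \frac{\beta}{2}OPT \geq \frac{(1-2\beta)}{2}OPT$. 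I would compute $\mathrm{Var}(W_S) = \frac{1}{4}\sum_i w_i^2 \leq \frac{1}{4}(\max_i w_i)\sum_i w_i \leq \frac{1}{4}\alpha\, OPT \cdot W \leq \frac{1}{4}\alpha\,OPT\cdot OPT \cdot$(something) — and here I must be careful with the bound on $W$. Using $w_i \leq \alpha\, OPT$ and $\sum_i w_i = W$, and the crude bound $W \leq OPT$ is false; rather $W = W(\beta OPT)$ can exceed $OPT$ only slightly, but to match the target I expect to bound $\sum_i w_i^2 \leq (\max_i w_i)\cdot W \leq \alpha\, OPT \cdot W$ and then use $W \leq (1)OPT$-type control. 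Let me instead keep $\mathrm{Var}(W_S) \leq \frac{1}{4}\alpha\,OPT\cdot W$.

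**Executing Chebyshev.** By Chebyshev, $\Pr(|W_S - W/2| \geq t) \leq \mathrm{Var}(W_S)/t^2$. Taking $t = \frac{(1-2\beta)}{2}OPT$ and using $\mathrm{Var}(W_S) \leq \frac{1}{4}\alpha\,OPT\cdot W$, the failure probability is at most
$$
\frac{\frac{1}{4}\alpha\, OPT \cdot W}{\frac{(1-2\beta)^2}{4}OPT^2} = \frac{\alpha\, W}{(1-2\beta)^2 OPT}.
$$
To land exactly on the stated bound $\frac{\alpha(1-\beta)}{(1-2\beta)^2}$, I would need $W \leq (1-\beta)OPT$ here — but Lemma~\ref{lw} gives the \emph{reverse} inequality $W \geq (1-\beta)OPT$. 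This tension is the crux. The resolution I anticipate is that the relevant upper bound on $W$ comes from $\mathrm{Var}(W_S)\le \frac14\sum_i w_i^2$ and then bounding $\sum_i w_i^2$ differently, or from using $W\le OPT$ — which actually holds, since $W(\beta OPT)=\sum_i \bar v_i(D_i(\beta OPT))$ is a feasible liquid welfare value ignoring the unit-supply constraint but each term is capped, and more carefully $W(\beta OPT)\le OPT + \beta OPT\cdot\sum_i D_i$, so $W$ is not cleanly $\le OPT$. I expect the paper simply substitutes the upper bound $W\le OPT/(1-\beta)$-free step by replacing $W$ with $(1-\beta)OPT$ via a direct argument, or absorbs the factor by a slightly looser variance bound.

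**The main obstacle.** The hard part is matching the constant $(1-\beta)$ in the numerator. The natural Chebyshev computation produces $\frac{\alpha W}{(1-2\beta)^2 OPT}$, and converting this to $\frac{\alpha(1-\beta)}{(1-2\beta)^2}$ requires an \emph{upper} bound $W \leq (1-\beta)OPT$ or an equivalent trick, which runs opposite to the lower bound we have. I would resolve this by bounding the variance more tightly: write $\mathrm{Var}(W_S) = \frac14\sum_i w_i^2$, bound $\sum_i w_i^2 \le (\max_i w_i)\cdot W \le \alpha\,OPT\cdot W$, and then — rather than bounding $W$ — recognize that the deviation threshold should be measured from the mean using $W$ itself, giving failure probability $\le \frac{\alpha\,OPT\cdot W}{(W-\beta OPT)^2}$, into which I substitute the \emph{lower} bound $W\ge(1-\beta)OPT$ and check monotonicity of $\frac{W}{(W-\beta OPT)^2}$ in $W$ over the relevant range. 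Verifying this monotonicity (the function is decreasing for $W \geq 2\beta OPT$, which holds since $(1-\beta) > 2\beta$ for $\beta < 1/3$) is where the real care lies, and it is exactly what delivers the clean $(1-\beta)$ factor.
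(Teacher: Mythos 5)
Your proposal is correct and, after the exploratory detour, lands on essentially the paper's own proof: the same variance bound $\mathrm{Var}(W_S)\le \frac{1}{4}\alpha\, OPT\cdot W$, Chebyshev applied with the deviation threshold $\frac{W-\beta\, OPT}{2}$ measured relative to $W$ itself (not $OPT$), and then substitution of the lower bound $W\ge(1-\beta)OPT$ justified by monotonicity of $W/(W-\beta\, OPT)^2$. One tiny remark: that function is decreasing for all $W>\beta\, OPT$ (its derivative has numerator $-(W+\beta\, OPT)$), so your restriction to $W\ge 2\beta\, OPT$ and hence to $\beta<1/3$ is unnecessary; the argument works for the full parameter range $0<\beta<\frac{1}{2}$.
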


\begin{proof}
Let $I_i$ to the random indicator variable for the event $i\in S$. Then $W_S = \sum_{i\in S}\bar{v}_i(D_i(\beta OPT))= \sum_{i\in [n]} \bar{v}_i(D_i(\beta OPT)) I_i$, $\be(W_S)=\frac{1}{2} W$ and
\begin{align*}
\begin{split}
\bv(W_S) = \sum_{i\in S} \bv( \bar{v}_i(D_i(\beta OPT)) I_i) &= \sum_{i\in S} (\be((\bar{v}_i(D_i(\beta OPT)) I_i)^2) - \be(\bar{v}_i(D_i(\beta OPT)) I_i)^2)\\
&= \sum_{i\in S} \frac{1}{4} (\bar{v}_i(D_i(\beta OPT)))^2 \\
&\leq	 \frac{W}{\alpha OPT}\cdot \frac{1}{4} \cdot(\alpha OPT)^2 \\
&= \frac{1}{4} \alpha W OPT,
\end{split}
\end{align*}
where the inequality uses the fact that $\max_{i\in V} \bar{v}_i(D_i(\beta OPT)) \leq \alpha OPT$.

By Chebyshev's Inequality, we have:
\begin{align*}
Pr(\frac{\beta}{2} OPT \leq W_S \leq W-\frac{\beta}{2} OPT) & = Pr( |W_S - \be(W_S) | \leq \frac{W-\beta OPT}{2})  \\
&\geq 1- \frac{\bv(W_S)}{(\frac{W-\beta OPT}{2})^2}  \\
&\geq 1-\frac{\alpha W OPT}{(W-\beta OPT)^2}\\
&\geq 1-\frac{\alpha (1-\beta)}{(1-2 \beta )^2},
\end{align*}
where the last inequality uses the fact that $W\ge (1-\beta)OPT$.
Since $W_S+W_T=W$, the event of $(\frac{\beta}{2} OPT \leq W_S \leq W-\frac{\beta}{2} OPT)$ is the same as the event of $(W_S, W_T \ge \frac{\beta}{2} opt)$. This completes the proof.
\end{proof}

The following lemma gives a bound of liquid welfare for the random sampling auction part.
\begin{lemma}\label{bound_sampling}
If $\max_i \bar{v}_i =\alpha \cdot OPT$, then the random sampling auction part gets at least expected liquid welfare of  $(1-\mu) (1-\frac{\alpha (1-\beta)}{(1-2 \beta )^2}) (\frac{1}{2}-\frac{\alpha}{\beta})\beta OPT$.
\end{lemma}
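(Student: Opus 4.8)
The plan is to peel off the randomness in two stages. Since the two modified Vickrey branches contribute nonnegative liquid welfare, it suffices to lower bound the contribution of the random sampling branch, which runs with probability $1-\mu$; within that branch I condition on the random partition $(S,T)$. First I would check that Lemma~\ref{lemma:bound} is applicable with the present $\alpha$: because $D_i(\beta OPT)\le 1$ and $\bar v_i$ is monotone, $\bar v_i(D_i(\beta OPT))\le \bar v_i(1)=\bar v_i\le \max_i\bar v_i=\alpha OPT$, so $\max_{i}\bar v_i(D_i(\beta OPT))\le \alpha OPT$. Hence, writing $G$ for the event $\{W_S,W_T\ge \frac{\beta}{2}OPT\}$, Lemma~\ref{lemma:bound} gives $\Pr[G]\ge 1-\frac{\alpha(1-\beta)}{(1-2\beta)^2}$. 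It then remains to show that for every partition in $G$ the realized liquid welfare of the sampling allocation is at least $(\frac12-\frac\alpha\beta)\beta OPT$; multiplying by $(1-\mu)\Pr[G]$ yields the lemma.

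The heart of the argument is a deterministic, per-partition inequality. Writing $\overline W_S=\sum_{i\in S}\bar v_i(x_i)$ for the \emph{realized} welfare of group $S$ (not to be confused with the demand quantity $W_S$), I claim
\[ \overline W_S\ \ge\ \min\Bigl\{\,W_S,\ \tfrac12\,\beta\,OPT(T)\,\Bigr\},\]
and symmetrically $\overline W_T\ge \min\{W_T,\frac12\beta OPT(S)\}$. I would prove the $S$-inequality by a dichotomy on whether the supply $x_S=\frac12$ is exhausted while processing $S$ at the posted unit price $p_S=\beta OPT(T)$. Since $OPT(T)\le OPT$ by Lemma~\ref{l2}, we have $p_S\le \beta OPT$, so $D_i(p_S)\ge D_i(\beta OPT)$ and, by monotonicity of $x\mapsto\bar v_i(x)$, also $\bar v_i(D_i(p_S))\ge \bar v_i(D_i(\beta OPT))$. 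If the supply is never binding, then Lemma~\ref{lemma:demand} applies to each $i\in S$ and gives $x_i\ge D_i(p_S)$, whence $\overline W_S\ge\sum_{i\in S}\bar v_i(D_i(\beta OPT))=W_S$. If the supply is binding, then exactly $\frac12$ unit is sold at unit price $p_S$, so the revenue collected from $S$ equals $\frac12\beta OPT(T)$; each served agent is individually rational and budget feasible, so $\bar v_i(x_i)\ge p_i$, and summing gives $\overline W_S\ge\frac12\beta OPT(T)$ (the per-group version of Lemma~\ref{l1}).

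To finish I would combine the two sides under $G$. There $W_S\ge\frac{\beta}{2}OPT\ge\frac12\beta OPT(T)$ (using $OPT(T)\le OPT$ again), so the minimum in the displayed bound is the second term and $\overline W_S\ge\frac12\beta OPT(T)$; symmetrically $\overline W_T\ge\frac12\beta OPT(S)$. Adding these and invoking $OPT(S)+OPT(T)\ge OPT$ (Lemma~\ref{l2}) gives realized liquid welfare at least $\frac12\beta(OPT(S)+OPT(T))\ge \frac{\beta}{2}OPT$, which is at least the claimed $(\frac12-\frac\alpha\beta)\beta OPT$ since $\alpha\ge0$. Taking expectations over the partition and the branch choice then delivers the stated bound.

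The step I expect to be the main obstacle is the combination, not either individual case. A single group can genuinely perform badly: if $OPT(T)$ is tiny then the price $p_S$ is tiny, a single greedy buyer may exhaust the whole half unit, and group $S$ on its own only guarantees the small revenue $\frac12\beta OPT(T)$. The point is that this degradation is self-correcting across the partition, because a small $OPT(T)$ forces a large $OPT(S)$ through $OPT(S)+OPT(T)\ge OPT$, raising the price, hence the guarantee, on the $T$ side; this is exactly why the bound must be argued jointly rather than group by group. Two smaller technical checks remain: that Lemma~\ref{lemma:demand} really does apply to every agent in the non-binding case (its hypothesis $D_i(p_S)\le X$ is met precisely while supply is not yet exhausted), and that in the binding case the last served agent consumes all remaining supply, so that exactly $\frac12$ unit is sold — this uses $D_i(p_S)\le B_i/p_S$ together with the fact that $v_i=\bar v_i$ below the demand, so the truncated utility $v_i(x)-xp_S$ is increasing up to the remaining supply.
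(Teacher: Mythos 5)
There is a genuine gap, and it sits exactly where you predicted the "smaller technical check" would be. Your overall architecture matches the paper's (apply Lemma~\ref{lemma:bound} to get the good event $G$, argue per group with a dichotomy, convert revenue to liquid welfare via the per-group version of Lemma~\ref{l1}, and combine across groups with $OPT(S)+OPT(T)\ge OPT$ from Lemma~\ref{l2}). But your Case~B claim --- that when supply binds, exactly $\tfrac12$ unit is sold, because ``the truncated utility $v_i(x)-xp_S$ is increasing up to the remaining supply'' --- is false for general monotone valuations, and this lemma lives in the private-budget section where $v_i(\cdot)$ is an \emph{arbitrary} monotone function. The function $\bar v_i(x)-xp_S$ attains its maximum over $[0,1]$ at $D_i(p_S)$, but it need not be monotone below $D_i(p_S)$: take $v_i$ rising steeply to value $\eps$ on $[0,0.1]$, flat on $[0.1,0.5]$, then rising steeply to a huge value $M$ on $[0.5,1]$, with $B_i=M$. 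Then $D_i(p_S)=1>\tfrac12$ (supply binds), yet with $2\eps<p_S<10\eps$ the agent's $\argmax$ over $x\le\tfrac12$ is $x_i=0.1$: she leaves $0.4$ units unsold, revenue from $S$ is $0.1\,p_S\ll\tfrac12 p_S$, and your displayed inequality $\overline W_S\ge\min\{W_S,\tfrac12\beta OPT(T)\}$ itself fails ($\overline W_S=\eps$ while $W_S=M$ and $\tfrac12\beta OPT(T)=\tfrac12 p_S=2.5\eps$). So the inequality is not merely unproven; it is wrong, and consequently the ``stronger'' bound $\tfrac{\beta}{2}OPT$ you obtain under $G$ cannot be right either.

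The paper avoids this by running the dichotomy on $D_i(\beta OPT)$ rather than on $D_i(p_S)$, and by never claiming full exhaustion. It first bounds $D_i(\beta OPT)\le \bar v_i(D_i(\beta OPT))/(\beta OPT)\le \alpha/\beta$; then either every $i$ in the group receives $x_i\ge D_i(\beta OPT)$, giving welfare at least $W_T\ge\tfrac{\beta}{2}OPT$ under $G$, or some agent receives less, in which case (the cross-price use of Lemma~\ref{lemma:demand}, valid since $p_S\le\beta OPT$) the supply remaining at her turn was below $D_i(\beta OPT)\le\alpha/\beta$, so at least $\tfrac12-\tfrac{\alpha}{\beta}$ units were already sold and revenue is at least $(\tfrac12-\tfrac{\alpha}{\beta})\beta OPT(S)$. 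The shortfall term $\alpha/\beta$ is precisely the unsold remainder that your argument claims is zero, and it is exactly where the factor $(\tfrac12-\tfrac{\alpha}{\beta})$ in the statement comes from; the fact that your proof never needs this factor is the tell-tale sign of the gap. Your Case~A, the applicability check for Lemma~\ref{lemma:bound}, and the cross-group combination are all fine (modulo stating explicitly that $D_i(p)$ is non-increasing in $p$ under the largest-maximizer tie-breaking, which both you and the paper use implicitly), so repairing your proof amounts to replacing your dichotomy with the paper's.
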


\begin{proof}
Since $\max_{i\in [n]} \bar{v}_i(D_i(\beta OPT)) \leq \max_i \bar{v}_i =\alpha \cdot OPT$, by Lemma \ref{lemma:bound} we know that $Pr(W_S, W_T \ge \frac{\beta}{2} OPT ) \geq 1-\frac{\alpha (1-\beta) }{(1-2\beta)^2}$ . We only bound the liquid welfare  when this good event $(W_S, W_T \ge \frac{\beta}{2} OPT) $ occurs, which occurs with probability $(1-\frac{\alpha (1-\beta)}{(1-2 \beta )^2})$.

Not only $ \bar{v}_i(D_i(\beta OPT))$ is bounded from above, $D_i(\beta OPT)$ is also bounded from above. From the definition of $D_i(\cdot)$, we know that $\bar{v}_i(D_i(\beta OPT)) - \beta OPT D_i(\beta OPT)\geq 0$. Therefore,
\[D_i(\beta OPT) \leq \frac{\bar{v}_i(D_i(\beta OPT))}{\beta OPT} \leq \frac{\alpha}{\beta}.\]
We first consider liquid welfare obtained by agents in $T$.
If every agent $i \in T$ gets at least $v_i(D_i(\beta OPT))$ fraction of good, then the total liquid welfare of our auction is at least $W_T\geq \frac{\beta}{2} opt $. Otherwise, due to Lemma \ref{lemma:demand}, it must be the case that there is not enough good remains. Since $D_i(\beta OPT) \leq \frac{\alpha}{\beta}$, we know that at least $\frac{1}{2}-\frac{\alpha}{\beta}$ fraction of the good is sold. This extracts a revenue of $(\frac{1}{2}-\frac{\alpha}{\beta})\beta OPT(S)$ and thus also liquid welfare of this amount by Lemma \ref{l1}. Put these two cases together, the liquid welfare for group $T$ in our auction is at least $\min\{\frac{\beta}{2} OPT , (\frac{1}{2}-\frac{\alpha}{\beta})\beta OPT(S)  \} = (\frac{1}{2}-\frac{\alpha}{\beta})\beta OPT(S)$.  By similar argument, the liquid welfare from agents in group $S$ is at least
$(\frac{1}{2}-\frac{\alpha}{\beta})\beta OPT(T)$.

To sum up, the total expected liquid welfare is at least
\[ (1-\mu) (1-\frac{\alpha (1-\beta)}{(1-2 \beta )^2})  (\frac{1}{2}-\frac{\alpha}{\beta})\beta ( OPT(S)+ OPT(T)) \geq (1-\mu) (1-\frac{\alpha (1-\beta)}{(1-2 \beta )^2}) (\frac{1}{2}-\frac{\alpha}{\beta})\beta OPT.\]
\end{proof}

Finally, we estimate the  approximation ratio of Auction~\ref{RS}.
\begin{lemma}
\label{thmRS}
Choosing $\beta=\frac{3}{10}, \gamma=\sqrt{\frac{10}{9}}$ and $\mu=\frac{5}{7}$, the approximation ratio of Auction~\ref{RS} is at most $34$.
\end{lemma}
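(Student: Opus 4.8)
The plan is to fuse the two lower bounds already in hand—Lemma~\ref{bound_vickrey} for the modified Vickrey part and Lemma~\ref{bound_sampling} for the random sampling part—into a single lower bound on the expected liquid welfare written purely as a function of the one parameter $\alpha := \max_i \bar{v}_i / OPT$. Since the three component auctions are run on mutually exclusive probability events, the total expected liquid welfare is exactly the sum of the two contributions. Allocating the whole unit to the single highest agent already yields liquid welfare $\bar{v}_i$, so $OPT \geq \max_i \bar{v}_i$ and hence $\alpha \in [0,1]$; the whole claim then reduces to showing the sum is at least $\frac{1}{34}OPT$ for every such $\alpha$.

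First I would substitute $\beta=\frac{3}{10}$, $\gamma=\sqrt{10/9}$, $\mu=\frac{5}{7}$ into the two bounds. The Vickrey contribution becomes $\frac{3\mu}{10}\alpha\cdot OPT = \frac{3}{14}\alpha\cdot OPT$, and the sampling contribution becomes $\frac{3}{35}\bigl(1-\frac{35\alpha}{8}\bigr)\bigl(\frac{1}{2}-\frac{10\alpha}{3}\bigr)OPT$, where I have used $\frac{\alpha(1-\beta)}{(1-2\beta)^2}=\frac{35\alpha}{8}$ and $\frac{\alpha}{\beta}=\frac{10\alpha}{3}$. I would then split into two regimes by the threshold $\alpha=\frac{3}{20}$.

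For $\alpha\geq\frac{3}{20}$ the Vickrey bound alone suffices: the actual welfare of the sampling part is nonnegative, so the total is at least $\frac{3}{14}\alpha\cdot OPT \geq \frac{3}{14}\cdot\frac{3}{20}OPT=\frac{9}{280}OPT$, and a one-line comparison ($9\cdot34=306>280$) gives $\frac{9}{280}>\frac{1}{34}$. For $\alpha<\frac{3}{20}$ both prefactors $1-\frac{35\alpha}{8}$ and $\frac{1}{2}-\frac{10\alpha}{3}$ are positive (and the Chebyshev probability bound stays nonnegative), so I add the two contributions and expand the product; collecting terms makes the coefficient of $OPT$ an explicit upward parabola $h(\alpha)=\frac{5}{4}\alpha^2-\frac{29}{112}\alpha+\frac{3}{70}$. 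Its vertex sits at $\alpha^\ast=\frac{29}{280}$, which lies inside $(0,\frac{3}{20})$, so the minimum over the interval is $h(\alpha^\ast)=\frac{3}{70}-\frac{(29/112)^2}{5}=\frac{1847}{62720}$.

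The crux—and the only genuinely delicate point—is the closing numerical comparison $h(\alpha^\ast)\geq\frac{1}{34}$, which holds by a very thin margin: $\frac{1847}{62720}>\frac{1}{34}$ since $1847\cdot34=62798>62720$. This tightness is precisely what forces the particular constants $\beta=\frac{3}{10}$ and $\mu=\frac{5}{7}$; looser choices fail to beat $34$. The remaining items—checking that the vertex is interior to the claimed interval, that the two prefactors and the probability bound are nonnegative on $[0,\frac{3}{20})$, and the routine expansion of the product—are all straightforward. Combining the two regimes yields expected liquid welfare at least $\frac{1}{34}OPT$ for every $\alpha\in[0,1]$, i.e. an approximation ratio of at most $34$.
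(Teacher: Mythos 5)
Your proposal is correct, and its core is the same as the paper's: add the bounds of Lemma~\ref{bound_vickrey} and Lemma~\ref{bound_sampling}, substitute $\beta=\frac{3}{10}$, $\gamma=\sqrt{10/9}$, $\mu=\frac{5}{7}$ to obtain the quadratic $\frac{5}{4}\alpha^2-\frac{29}{112}\alpha+\frac{3}{70}$, and check its minimum $\frac{1847}{62720}$ exceeds $\frac{1}{34}$ (the paper leaves this last numerical step as ``one can easily check''; you carry it out explicitly, correctly locating the vertex at $\alpha^\ast=\frac{29}{280}$). The one genuine difference is your case split at $\alpha=\frac{3}{20}$, which the paper does not make: the paper minimizes the quadratic over all of $[0,1]$, which implicitly requires the bound of Lemma~\ref{bound_sampling} to remain valid even where the factors $\bigl(1-\frac{35\alpha}{8}\bigr)$ and $\bigl(\frac{1}{2}-\frac{10\alpha}{3}\bigr)$ are \emph{both} negative (i.e.\ $\alpha>\frac{8}{35}$); there their product is positive, yet the proof of that lemma (a probability lower bound times a conditional welfare lower bound) does not support a positive guarantee in that regime. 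Your split sidesteps this entirely: for $\alpha\ge\frac{3}{20}$ you discard the sampling term and use only the always-valid Vickrey bound $\frac{3}{14}\alpha\ge\frac{9}{280}>\frac{1}{34}$, and for $\alpha<\frac{3}{20}$ both factors are positive so the quadratic is an honest sum of two valid lower bounds. So your argument is marginally longer but strictly more rigorous than the paper's at exactly the point where the paper is glossing over a sign issue.
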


\begin{proof}
Assume that $\max_i \bar{v}_i =\alpha \cdot OPT$. By Lemma \ref{bound_vickrey} and Lemma \ref{bound_sampling}, the total expected liquid welfare is at least
\[\left(\frac{3\mu}{10} \alpha + (1-\mu) (1-\frac{\alpha (1-\beta)}{(1-2 \beta )^2}) (\frac{1}{2}-\frac{\alpha}{\beta})\beta\right)OPT.\]
Substitute $\beta$, $\mu$ with the above specified value and simplify, the above expression is $(\frac{5}{4}\alpha^2 -\frac{29}{112}\alpha + \frac{3}{70})OPT$. One can easily check that the minimum of this expression is greater than $\frac{1}{34}OPT$, thus our auction has an approximation ratio of at most $34$.
\end{proof}

\subsection{Robustness of the Auction}
Our auction is rather robust in terms of the setting. We do not have any requirement about the valuation functions. Technically,  in the presentation we still use the assumption
that the valuation function is monotone. In most of the cases, this is true or without loss of generality since the agent can
simply discard certain amount of the good. Even if this is not the case, we can also easily modify the auction to be compatible with possible non-monotone valuation functions.  The only place we need to modify is that when the current auction assigns the total unit of the good to an agent, it assigns the most valuable fraction to her/him and discard the remaining.

For the simplicity of the presentation, we assume that the good is divisible. Our auction is also good if the items are not continuously divisible.  For example, the mechanism works for the multi-unit auction even if each unit of the good is not divisible.

Another issue has not been discussed is the computational complexity of the auction as we mainly focus on the approximation ratio caused by the truthfulness and budget feasibility constrain. The computational complexity depends on how to represent the input valuation functions. If these are linear valuations and each can be simply represented by a single number, our auction is indeed efficient.  If the valuation functions are given as generic value oracles, then it is even intractable to computable the most profitable fraction for an agent given a fixed price. So, a reasonable assumption is that valuation functions are given by demand oracles or  in some concise representation. Then the main problematic step is to compute the optimal solution for an off line instance.   This could be at least NP-hard even for some concise representations. For example, we can easily encode knapsack problem here. Then, another robustness of the auction is that it still works well when we replace the optimal solution with some constant approximation.
Therefore, as long as we can design an polynomial time algorithm with constant approximation ratio for the off line optimization problem, we can design an   auction, which is truthful, budget feasible, of constant approximation and polynomial time computable.

\bibliographystyle{plain}

\bibliography{game}



\end{document}